\theoremstyle{plain}
\newtheorem{thm}{Theorem}[section]
\newtheorem{theorem}[thm]{Theorem}
\newtheorem{corollary}[thm]{Corollary}
\newtheorem{lemma}[thm]{Lemma}
\newtheorem{proposition}[thm]{Proposition}
\newtheorem{claim}{Claim}
\theoremstyle{remark}
\newtheorem{remark}[thm]{Remark}
\theoremstyle{definition}
\newcounter{mnotecount}[section]
\renewcommand{\phi}{\varphi}
\renewcommand{\epsilon}{\varepsilon}
\newcommand{\bR}{\mathbb{R}}
\newcommand{\bL}{\mathbb{L}}
\newcommand{\cY}{\mathcal{Y}}
\newcommand{\scal}{\mathrm{Scal}}
\newcommand{\ghat}{\widehat{g}}
\newcommand{\Khat}{\widehat{K}}
\newcommand{\hscal}{\widehat{\scal}}
\DeclareMathOperator{\tr}{tr}
\DeclareMathOperator{\divg}{div}
\DeclareMathOperator{\vol}{Vol}
\newcommand{\gtil}{\widetilde{g}}
\newcommand{\scaltil}{\widetilde{\scal}}
\newcommand{\phitil}{\widetilde{\phi}}
\newcommand{\psitil}{\widetilde{\psi}}
\newcommand{\Atil}{\widetilde{A}}
\newcommand{\Sring}{\mathring{S}}
\newcommand{\definedas}{\mathrel{\raise.095ex\hbox{\rm :}\mkern-5.2mu=}}
\DeclareMathOperator{\DeltaL}{\Delta_{\bL}}
\DeclareMathOperator*{\essinf}{ess\,inf}
\date{January 15, 2024}
\keywords{Einstein constraint equations, non-constant mean curvature, conformal method,
Lichnerowicz equation, compact manifold, prescribed scalar curvature}
\subjclass[2000]{53C21 (Primary), 35Q75, 53C80, 83C05 (Secondary)}
\begin{document}


\author[R. Gicquaud]{Romain Gicquaud}
\address[R. Gicquaud]{Institut Denis Poisson \\ UFR Sciences et Technologie \\
  Facult\'e de Tours \\ Parc de Grandmont\\ 37200 Tours \\ FRANCE}
\email{\href{mailto: R. Gicquaud <Romain.Gicquaud@idpoisson.fr>}{romain.gicquaud@idpoisson.fr}}

\title[Uniqueness for the conformal method]{What Uniqueness for the Holst-Nagy-Tsogtgerel--Maxwell Solutions to the Einstein Conformal Constraint Equations?}
\keywords{Einstein constraint equations, non-CMC, conformal method, Banach fixed point theorem, positive Yamabe invariant, small TT-tensor}

\begin{abstract}
  This paper addresses the issue of uniqueness of solutions in the conformal method for solving the constraint equations in general relativity with arbitrary mean curvature as developed initially by Holst, Nagy, Tsogtegerel and Maxwell. We show that the solution they construct is unique amongst those having volume below a certain threshold.
\end{abstract}

\date{\today}
\maketitle

\tableofcontents

\section{Introduction}
The foundational elements in the initial data problem of general relativity are
the Einstein constraint equations. These equations ensure that the initial data
set $(M, \ghat, \Khat)$, consisting of a Cauchy surface $M$ of dimension $n
    \geq 3$, the spatial metric $\ghat$ and extrinsic curvature $\Khat$, is
physically viable and evolves according to Einstein's field equations.
Specifically, the constraint equations are comprised of the Hamiltonian
constraint and the momentum constraint that we are going to describe.

These equations arise regardless of the presence of matter fields, but in this
article, we will focus exclusively on the vacuum case, where only the
gravitational field is modeled. This choice allows for a more streamlined
presentation of the mathematical techniques involved, without additional terms
from matter sources. However, the methods discussed here can be extended to
more general settings with appropriate modifications.

The Hamiltonian constraint ensures that the energy density vanishes on $M$.
Mathematically, it can be expressed as:
\begin{equation}\label{eqHamiltonian}
    0 = \hscal + (\tr_{\ghat} \Khat)^2 - \left|\Khat\right|^2_{\ghat},
\end{equation}

where \( \hscal \) is the scalar curvature of the spatial metric and \(
\tr_{\ghat} \Khat \) is the trace of the extrinsic curvature i.e.~the mean
curvature.

On the other hand, the momentum constraint ensures that the momentum density
vanishes. It is given by:
\begin{equation}\label{eqMomentum}
    0 = \divg_{\ghat} \Khat - d (\tr_{\ghat} \Khat),
\end{equation}

where \( \divg_{\ghat}\) denotes the divergence operator with respect to the
spatial metric $\ghat$. We refer the reader to~\cite{RingstromCauchyProblem}
for an in depth study of the Cauchy problem in general relativity and
to~\cite{CarlottoReview} for an overview of the techniques developed to study
the constraint equations.\\

The conformal method has emerged as a pivotal tool in the initial data
formulation of Einstein's field equations. It consists in transforming the
underdetermined system of the constraint equations~\eqref{eqHamiltonian}
and~\eqref{eqMomentum} into a set of coupled elliptic partial differential
equations by decomposing the initial data $(\ghat, \Khat)$ into seed data
(i.e.~given data) and unknowns that have to be adjusted to satisfy the
constraints. This decomposition allows for a more tractable approach to finding
solutions that satisfy both the geometric and physical requirements of general
relativity. It has been instrumental in advancing our understanding of
spacetime dynamics, particularly in numerical simulations of phenomena such as
black hole mergers and gravitational wave generation, see e.g.~\cite{Lehner}
for an overview.

We now specify the form of the seed data and the unknowns, together with the
Sobolev regularity we will work with throughout the paper:
\begin{itemize}
    \item A closed manifold $M$ of dimension $n \geq 3$,
    \item A given metric $g \in W^{2, p}(M, S_2 M)$ on $M$ for some $p > n/2$ that has
          positive Yamabe invariant: $\cY_g(M) > 0$, volume $1$, and no conformal Killing
          vector field,
    \item A function $\tau \in W^{1, q}(M, \bR)$, with $q \geq p$ and $q > n$, that will
          play the role of the mean curvature of the physical solution $(\ghat, \Khat)$
          embedded in the spacetime solution to Einstein's equations,
    \item A symmetric 2-tensor $\sigma \in L^{2p}(M, S_2 M)$ that is trace-free and
          divergence-free with respect to the metric $g$.
\end{itemize}
And the unknowns are
\begin{itemize}
    \item A positive function $\phi \in W^{2, p}(M, \bR)$,
    \item A vector field $W \in W^{2, p}(M, TM)$.
\end{itemize}

The reasons for these choices for regularity will be made clear in the sequel.
The condition that $(M, g)$ has volume $1$ can be achieved by rescaling the
metric $g$ by a constant factor. It has the useful consequence that, for any
function $f$, the map $r \mapsto \|f\|_{L^r}$ is increasing.\\

For the sake of completeness, we remind the reader that the Yamabe invariant of
the manifold $(M, g)$ is defined as follows:
\[
    \cY_g(M) = \inf_{\substack{u \in W^{1, 2}(M, \bR),\\ u \not\equiv 0}}
    \frac{\int_M \left[\frac{4(n-1)}{n-2} |du|^2 + \scal~u^2\right] d\mu^g}{\left(\int_M u^N d\mu^g\right)^{2/N}},
\]
where \(N \definedas 2n/(n-2)\). This is a conformal invariant, i.e.~it only
depends on the conformal class $[g]$ of the metric $g$, see~\cite{LeeParker}
for an overview.

From the seed data, we construct the metric $\ghat$ and the second fundamental
form $\Khat$ as follows:
\[
    \ghat = \phi^{N-2} g, \quad \Khat = \frac{\tau}{n} \ghat + \phi^{-2} (\sigma + \bL W),
\]
where $\bL$ is the conformal Killing operator defined in coordinates as
follows:
\[
    \bL W_{ij} = \nabla_i W_j + \nabla_j W_i - \frac{2}{n} \nabla^k W_k g_{ij},
\]
$\nabla$ being the Levi-Civita connection of the metric $g$.

The constraint equations are then reformulated in terms of these new variables
into the following system:
\begin{subequations}\label{system}
    \begin{align}
        -\frac{4(n-1)}{n-2} \Delta \phi + \scal \phi & = - \frac{n-1}{n} \tau^2 \phi^{N-1} + \frac{\left|\sigma + \bL W \right|_g^2}{\phi^{N+1}}, \label{eqLichnerowicz} \\
        \DeltaL W                                    & = \frac{n-1}{n} \phi^N d\tau, \label{eqVector}
    \end{align}
\end{subequations}
where we set
\[
    \DeltaL \definedas - \frac{1}{2} \bL^* \left(\bL \cdot\right).
\]
Equation~\eqref{eqLichnerowicz} is known as the Lichnerowicz equation while
Equation~\eqref{eqVector} will be called the vector equation.
Solving~\eqref{system} is then equivalent to solving~\eqref{eqHamiltonian}
and~\eqref{eqMomentum}. Note that we have taken as a convention that all
differential operators and curvature tensors are defined with respect to the
metric $g$. When the objects are defined with respect to a different metric
$\gtil$, we will explicitly indicate it and denote for example $\Delta_{\gtil}$
the Laplace-Beltrami operator or $\scaltil$ the scalar curvature of the metric
$\gtil$.\\

Restricting ourselves to compact Cauchy surfaces, one of the major achievements
of the conformal method has been the classification of constant mean curvature
(CMC) initial data by J. Isenberg in 1995~\cite{Isenberg}. In this particular
case, which is of great physical importance, we have $d\tau \equiv 0$. So
Equation~\eqref{eqVector} reduces to $\DeltaL W = 0$ which imposes that $\bL W
    \equiv 0$, i.e. $W$ is a conformal Killing vector field. This reduces the
system to a single scalar equation, the Lichnerowicz equation, which is more
tractable. We refer the reader to~\cite{GicquaudLichnerowicz} for a simplified
approach to solving this equation. The near-CMC case was subsequently treated
in different papers, in particular by Allen, Clausen and Isenberg
in~\cite{ACI08} for manifolds with positive Yamabe invariant.\\

For arbitrary mean curvature $\tau$, the conformal constraint equations appear
far more intricate and only partial results are known at the moment. A
significant development was introduced by Holst, Nagy and Tsogtgerel in 2008
in~\cite{HNT1, HNT2}. This method was refined shortly after by Maxwell
in~\cite{MaxwellNonCMC} and results with different regularity assumptions were
obtained by Nguyen in~\cite{Nguyen}, see also~\cite{GicquaudHabilitation}. The
strategy to construct solutions to the conformal constraint equations is based
on Schauder's fixed point theorem. The authors construct a compact map $F: X
    \to X$, where $X$ is an appropriate Banach space of functions on $M$, as
follows: given a conformal factor $\phi$, solve the vector
equation~\eqref{eqVector} for the vector $W$ and then use this solution to
obtain a new conformal factor $\psi = F(\phi)$ via the Lichnerowicz
equation~\eqref{eqLichnerowicz}. The difficult part consists in identifying a
closed bounded subset $\Omega$ of $X$ stable by $F$, i.e.~such that $F(\Omega)
    \subset \Omega$.

The success of this method hinges on the assumption of a positive Yamabe
invariant of the seed metric $g$ and the smallness of the TT-tensor $\sigma$ in
some well chosen norm. In this case, following~\cite{GicquaudHabilitation}, the
set $\Omega$ is, roughly speaking, the set of functions $\phi$ whose $L^q$-norm
is less than some small constant, where $q \in (N, \infty]$ depends on the
regularity assumptions of the seed data. We provide below a new derivation of
this result based on techniques introduced in~\cite{Pailleron}. While we do not
claim any novelty here, we intend to obtain an existence result with clearly
identifiable constants.

To state more precisely our results, we introduce the positive constants
$\mu_L$ and $\mu_V$ defined as follows:
\begin{itemize}
    \item $\mu_L > 0$ is a constant such that, if $\phi$ is a solution to the general form of the Lichnerowicz equation
          \[
              - \frac{4(n-1)}{n-2} \Delta \phi + \scal~\phi + \frac{n-1}{n} \tau^2 \phi^{N-1} = \frac{A^2}{\phi^{N+1}},
          \]
          for some $A \in L^2(M, \bR)$, then
          \[
              \mu_L \|\phi^N\|_{L^{\frac{N}{2}+1}}^{2\frac{n-1}{n}} \leq \|A\|_{L^2}^2
          \]
          (see Lemma~\ref{lmLichnerowicz}).
    \item $\mu_V$ is the following Sobolev-type constant:
          \begin{equation}\label{eqYamabeVector}
              \mu_V \definedas \inf_{\substack{V \in W^{1, 2}(M, TM),\\ V \not\equiv 0}}
              \frac{\frac{1}{2}\int_M |\bL V|^2 d\mu^g}{\left(\int_M |V|^N d\mu^g\right)^{2/N}} \geq \mu_0.
          \end{equation}
\end{itemize}

We can now state the first result of this paper:
\begin{theorem}\label{thmA}
    Under the regularity assumptions given above, assume further that $\sigma \not\equiv 0$ is such that
    \begin{equation}\label{eqXmax}
        \left(\frac{2}{\mu_L \mu_V} \frac{n-1}{n} \|d\tau\|_{L^q}^2\right)^{n-1} \|\sigma\|_{L^2}^2 \leq \frac{\mu_L}{n}.
    \end{equation}
    Then there exists a solution $(\phi, W)$ to the system~\eqref{eqLichnerowicz}--\eqref{eqVector} with $\phi \in W^{2, p}(M, \bR)$ and $W \in W^{2, p}(M, TM)$.
\end{theorem}

A notable limitation of Schauder's fixed point theorem is its inability to
guarantee the uniqueness of the solution, even in the stable set $\Omega$. The
paper~\cite{GicquaudNgo} aimed at providing a different point of view on this
method. It leads to similar results but the construction is based on the
implicit function theorem so it provides some sort of local uniqueness.
However, the question of the global uniqueness of the solution pertains. In
particular, the article~\cite{Nguyen2} showed that it can fail. Here we address
this question by introducing a bound on the physical volume of solutions. We
argue that, by restricting our attention to solutions whose associated volume
\[
    \vol_{\ghat}(M) = \int_M \phi^N d\mu^g
\]
is below a fixed threshold $V_{\max}$, we can achieve uniqueness for the
solutions of the conformal constraint equations. The precise statement of our
result is the following:

\begin{theorem}\label{thmB}
    Under the regularity assumptions stated above, assume given a constant $V_{\max} > 0$ such that
    \begin{equation}\label{eqDefDelta}
        \frac{2}{\mu_V} \left(\frac{n-1}{n}\right)^2 \|d\tau\|_{L^n}^2 V_{\max}^{2/n} < \mu_L
    \end{equation}

    where $\mu_L$ and $\mu_V$ are as defined above. Let $\theta$ be a given
    positive constant. There exists a constant $s'_{\max} > 0$ such that if $\sigma
        \not\equiv 0$ is a TT-tensor such that
    \[
        \|\sigma\|_{L^{2p}} < s'_{\max} \quad \text{and} \quad \essinf_M |\sigma| \geq \theta \|\sigma\|_{L^{2p}},
    \]
    there exists a unique solution $(\phi, W)$ to the conformal constraint
    equations~\eqref{system} such that
    \[
        \vol_{\ghat}(M) = \int_M \phi^N d\mu^g  \leq V_{\max}.
    \]
\end{theorem}

We believe that the lower bound assumption on $|\sigma|$ is of a technical
nature and could potentially be removed by a more refined analysis. This point
is discussed further in the concluding remarks.

Interestingly, the technique developed here nearly allows us to replace
Schauder's fixed point theorem by Banach's, making the construction of Holst,
Nagy, Tsogtgerel and Maxwell more robust, in particular for applications in
numerical relativity.\\

The outline of this paper is as follows. In Section~\ref{secHNT}, we provide a
proof of Theorem \ref{thmA} based on ideas from~\cite{Pailleron}. In
particular, this allows us to discuss the size of the solutions with respect to
the seed data. And, in Section~\ref{secUniqueness}, we prove the main result of
the paper, namely Theorem~\ref{thmB}.\\

\noindent\textbf{Acknowledgments:} Part of the work on this paper was done while
the author was in the Guangxi Center for Mathematical Research attending the
conference \emph{International Conference on Geometric Analysis of Ricci Curvature}.
He wants to express his deep gratitude towards the organizers and to Mengzhang Fan
for their hospitality and for providing me an environment conducive to productive work.
The author would also like to thank the referee of this paper for his thorough
proofreading and his comments which allowed the presentation to be significantly
improved.

This research was supported by the ANR grant Einstein-PPF, ANR-23-CE40-0010-03.

\section{Solutions to the conformal constraint equations}\label{secHNT}
The aim of this section is to prove Theorem~\ref{thmA}. We start by recalling
facts about the Lichnerowicz and the vector equations that we borrow
from~\cite{Pailleron} but state them here in lesser generality as it will not
be needed here. In particular, in this note, we are interested only in the case
of closed manifolds. This allows several simplifications as compared
to~\cite{Pailleron}. The interest of the method developed there lies in its
ability to work within a weak regularity framework. This simplifies the
construction of the invariant closed set, and allows us to postpone questions
regarding the regularity of the solution.

The following proposition regarding existence and uniqueness for the solution
to the Lichnerowicz equation corresponds to~\cite[Proposition 2.3 and
    2.10]{Pailleron}. The proof is rather lengthy so we refer the reader to the
original paper:

\begin{proposition}\label{propLich}
    Given $A \in L^2(M)$, $A \not\equiv 0$, the Lichnerowicz equation
    \begin{equation}\label{eqLichGeneral}
        - \frac{4(n-1)}{n-2} \Delta \phi + \scal~\phi + \frac{n-1}{n} \tau^2 \phi^{N-1} = \frac{A^2}{\phi^{N+1}}
    \end{equation}
    admits a unique weak positive solution $\phi \in W^{1, 2}(M, \bR)$ in the sense that for all $\psi \in W^{1, 2}(M, \bR) \cap L^\infty(M)$, we have
    \begin{equation}\label{eqLichWeak}
        \int_M \left[\frac{4(n-1)}{n-2} \<d\phi, d\psi\> + \scal~\phi \psi + \frac{n-1}{n} \tau^2 \phi^{N-1} \psi \right] d\mu^g
        = \int_M \frac{A^2}{\phi^{N+1}} \psi d\mu^g.
    \end{equation}
    Further, $\phi^N$ belongs to $L^{\frac{N}{2}+1}(M, \bR)$ and the mapping $\mathrm{Lich}: A \mapsto \phi^N$ is continuous and compact when seen as a mapping
    \[
        \mathrm{Lich}: L^2(M, \bR) \to L^a(M, \bR)
    \]
    for all $\displaystyle a \in \left[1, \frac{N}{2} + 1 \right)$.
\end{proposition}
Of particular interest for us later on is the upper bound for $\|\phi^N\|_{L^{\frac{N}{2}+1}}$ in terms of $\|A\|_{L^2}$. We rederive it here.

From the assumption that $\cY_g(M) > 0$, there exists a function $\psi \in
    W^{2, p}(M, \bR)$, $\psi > 0$, such that the metric $\gtil \definedas
    \psi^{N-2} g$ has scalar curvature $\scaltil \in L^p(M, \bR)$ bounded from
below by a positive constant $\epsilon > 0$, see~\cite{LeeParker}. As $p >
    n/2$, the Sobolev space $W^{2, p}(M, \bR)$ embeds continuously into
$L^\infty(M)$. This implies that $\psi^{-1} \in W^{2, p}(M, \bR)$. If we set
$\phitil \definedas \psi^{-1} \phi$, the Lichnerowicz
equation~\eqref{eqLichWeak} can be written as
\begin{equation}\label{eqLichWeak2}
    \int_M \left[\frac{4(n-1)}{n-2} \<d\phitil, d\psitil\>_{\gtil} + \scaltil~\phitil \psitil + \frac{n-1}{n} \tau^2 \phitil^{N-1} \psitil \right] d\mu^{\gtil}
    = \int_M \frac{\Atil^2}{\phitil^{N+1}} \psitil d\mu^{\gtil},
\end{equation}
for any $\psitil \in W^{1, 2}(M, \bR) \cap L^\infty(M)$ and where $\Atil \definedas \psi^{-N} A$ as follows from a fairly straightforward calculation. The idea is now to choose $\psitil = \phitil^{N+1}$ and integrate over $M$. However, the function $\phitil^{N+1}$ does not belong a priori to $W^{1, 2}(M, \bR) \cap L^\infty(M, \bR)$, so some care is needed. Instead, we choose a cutoff value $k > 0$ and set
\[
    \psitil = \left(\phitil^{N+1}\right)_{, k} \definedas \min \left\{\phitil^{N+1}, k \right\}.
\]
As the mapping $y \mapsto (y^{N+1})_{, k}$ is bounded and Lipschitz, we have
$\left(\phitil^{N+1}\right)_{, k} \in W^{1, 2}(M, \bR) \cap L^\infty(M, \bR)$
providing a legitimate test function for~\eqref{eqLichWeak2}:
\begin{align*}
     & \int_M \left[\frac{4(n-1)}{n-2} \left\<d\phitil, d\left(\phitil^{N+1}\right)_{, k}\right\>_{\gtil} + \scal~\phitil \left(\phitil^{N+1}\right)_{, k}\right] d\mu^{\gtil} \\
     & \qquad = \int_M \left[- \frac{n-1}{n} \tau^2 \phitil^{N-1} + \frac{\Atil^2}{\phitil^{N+1}}\right] \left(\phitil^{N+1}\right)_{, k} d\mu^{\gtil}                         \\
     & \qquad \leq \int_M \Atil^2 \frac{\left(\phitil^{N+1}\right)_{, k}}{\phitil^{N+1}}d\mu^{\gtil}                                                                           \\
     & \qquad \leq \int_M \Atil^2 d\mu^{\gtil}.
\end{align*}
Note that the first term in the left hand side of the previous equation can be rewritten as follows:
\begin{align*}
    \int_M \left\<d \left(\phitil^{N+1}\right)_{, k}, d\phitil\right\>_{\gtil} d\mu^{\gtil}
     & = \int_M \mathds{1}_{\{\phitil^{N+1} \leq k\}}\left\<d \phitil^{N+1}, d\phitil\right\>_{\gtil} d\mu^{\gtil}                                          \\
     & = (N+1) \int_M \mathds{1}_{\{\phitil^{N+1} \leq k\}} \phitil^N \left\<d \phitil, d\phitil\right\>_{\gtil} d\mu^{\gtil}                               \\
     & = \frac{N+1}{\left(\frac{N}{2}+1\right)^2}\int_M \mathds{1}_{\{\phitil^{N+1} \leq k\}}\left|d \phitil^{\frac{N}{2}+1}\right|^2_{\gtil} d\mu^{\gtil}.
\end{align*}
where $\mathds{1}_{\{\phitil^{N+1} \leq k\}}$ is the characteristic function for the set of points $x \in M$ such that $\phitil^{N+1}(x) \leq k$. All in all, we obtain the following inequality:
\[
    \int_M \left[\frac{4(n-1)}{n-2} \frac{N+1}{\left(\frac{N}{2}+1\right)^2} \mathds{1}_{\{\phitil^{N+1} \leq k\}}\left|d \phitil^{\frac{N}{2}+1}\right|^2_{\gtil}+ \scaltil~\left(\phitil^{N+1}\right)_{, k} \phitil\right] d\mu^{\gtil}\\
    \leq \int_M \Atil^2 d\mu^{\gtil}.
\]
As $\scaltil \geq 0$, it follows from the monotone convergence theorem that,
letting $k$ tend to infinity, we have
\begin{equation}\label{eqW1bound}
    \int_M \left[\frac{3n-2}{n-1} \left|d \phitil^{\frac{N}{2}+1}\right|^2_{\gtil}+ \scaltil~\phitil^{N+2}\right] d\mu^{\gtil}\\
    \leq \int_M \Atil^2 d\mu^{\gtil}.
\end{equation}
Since $\scaltil \geq \epsilon > 0$, we obtain the following estimate:
\[
    \int_M \left[\frac{3n-2}{n-1} \left|d \phitil^{\frac{N}{2}+1}\right|^2_{\gtil}+ \scaltil~\phitil^{N+2}\right] d\mu^{\gtil}
    \geq \epsilon \int_M \left[\left|d\phitil^{\frac{N}{2}+1}\right|^2_{\gtil} + \left(\phitil^{\frac{N}{2}+1}\right)^2\right] d\mu^{\gtil},
\]
where we have assumed, without loss of generality, that $\epsilon <
    \frac{3n-2}{n-1}$. From the Sobolev embedding theorem (with respect to the
metric $\gtil$), we know that there exists a constant $\nu > 0$ such that, for
any $u \in W^{1, 2}(M, \bR)$,
\begin{equation}\label{eqDefNu}
    \int_M \left[\frac{3n-2}{n-1} \left|d u\right|^2_{\gtil}+ \scaltil~u^2\right] d\mu^{\gtil}
    \geq \nu \left(\int_M u^N d\mu^{\gtil}\right)^{2/N}.
\end{equation}
In particular, choosing $u = \phitil^{\frac{N}{2}+1}$, we get
\[
    \int_M \left[\frac{3n-2}{n-1} \left|d \phitil^{\frac{N}{2}+1}\right|^2_{\gtil}+ \scaltil~\phitil^{N+2}\right] d\mu^{\gtil}
    \geq \nu \left(\int_M \phitil^{N \left(\frac{N}{2}+1\right)} d\mu^{\gtil}\right)^{2/N}
\]
From Estimate~\eqref{eqW1bound}, we conclude that
\[
    \nu \left(\int_M \phitil^{N \left(\frac{N}{2}+1\right)} d\mu^{\gtil}\right)^{2/N} \leq \int_M \Atil^2 d\mu^{\gtil}.
\]
Written in terms of the metric $g$ and the function $\phi$ the previous
inequality reads
\[
    \nu \left(\int_M \psi^{-N^2/2}\phi^{N \left(\frac{N}{2}+1\right)} d\mu^g\right)^{2/N} \leq \int_M \psi^{-N} A^2 d\mu^g.
\]
Finally, estimating $\psi$ from above and below by its minimum and it maximum,
we conlude that
\[
    \nu \left(\frac{\min_M \psi}{\max_M \psi}\right)^N \left(\int_M \phi^{N \left(\frac{N}{2}+1\right)} d\mu^g\right)^{2/N} \leq \int_M A^2 d\mu^g.
\]

As a consequence, we have obtained the following result:
\begin{lemma}\label{lmLichnerowicz}
    There exists a positive constant $\displaystyle \mu_L \definedas \nu \left(\frac{\min_M \psi}{\max_M \psi}\right)^N$ such that, if $\phi$ denotes the solution to the Lichnerowicz equation~\eqref{eqLichGeneral}, we have
    \[
        \mu_L \|\phi^N\|_{L^{\frac{N}{2}+1}}^{2\frac{n-1}{n}} \leq \|A\|_{L^2}^2.
    \]
\end{lemma}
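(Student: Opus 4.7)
The plan is to derive this inequality as an a priori energy estimate obtained by testing the Lichnerowicz equation \eqref{eqLichGeneral} against a suitable power of the unknown, and then applying a Sobolev embedding inequality of Yamabe type. The critical point is that \eqref{eqLichGeneral} has a sign-negative scalar curvature term, so the naive energy estimate need not close; to fix this one first has to perform a conformal change so that the scalar curvature becomes nonnegative, which is exactly where the hypothesis $\cY_g(M) > 0$ is used.

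Concretely, I would first invoke the fact recalled in the excerpt that there exists $\psi \in W^{2,p}(M,\bR)$ with $\psi > 0$ so that the conformal metric $\gtil \definedas \psi^{N-2} g$ has scalar curvature $\scaltil \in L^p(M,\bR)$ bounded below by some $\epsilon > 0$. Setting $\phitil \definedas \psi^{-1}\phi$ and $\Atil \definedas \psi^{-N} A$, the Lichnerowicz equation takes the same form \eqref{eqLichGeneral2} with respect to $\gtil$, but now with a nonnegative scalar curvature coefficient, which is what will allow the estimate to close.

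The natural test function for \eqref{eqLichGeneral2} is $\phitil^{N+1}$: multiplied into the operator, integration by parts produces a term proportional to $|d\phitil^{N/2+1}|^2$, while on the right-hand side it cancels exactly the singular factor $\phitil^{-(N+1)}$ and leaves the manageable quantity $\Atil^2$. The technical issue is that $\phitil$ is only known in $W^{1,2}$, so $\phitil^{N+1}$ is not admissible as a test function. To deal with this I would use the Lipschitz truncation $\left(\phitil^{N+1}\right)_{,k} \definedas \min\{\phitil^{N+1}, k\}$, which lies in $W^{1,2}(M,\bR) \cap L^\infty(M)$ and is a legitimate test function. Dropping the nonnegative $\tau^2 \phitil^{N-1}$ term and using that $(\phitil^{N+1})_{,k}/\phitil^{N+1} \leq 1$ on the right, I would obtain a uniform-in-$k$ inequality, and then let $k \to \infty$ by monotone convergence on the gradient term and on the scalar curvature term (both signs being favorable, thanks to $\scaltil \geq 0$). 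This yields \eqref{eqW1bound}.

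Finally, I would apply the Sobolev inequality \eqref{eqDefNu} with $u = \phitil^{N/2+1}$, bounding $\|\phitil^{N/2+1}\|_{W^{1,2}(\gtil)}$ by $\|\Atil\|_{L^2(\gtil)}$, and then translate back to the original metric $g$ by expressing both sides in terms of $\phi$, $A$ and the conformal factor $\psi$. Uniform control of $\psi$ between $\min_M\psi$ and $\max_M\psi$ converts the integrals with respect to $d\mu^{\gtil}$ into integrals with respect to $d\mu^g$ up to the conformal factor $(\min_M\psi/\max_M\psi)^N$, producing precisely the constant $c = \nu(\min_M\psi/\max_M\psi)^N$. A short algebraic check using $N = 2n/(n-2)$ confirms that the exponent $2/N$ coming out of Sobolev matches $2(n-1)/n$ when rewritten as a power of $\|\phi^N\|_{L^{N/2+1}}$, giving the stated inequality. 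The main obstacle is really the test function issue — one must be careful that the truncation argument is clean and that the boundary terms vanish in the limit — while everything else is bookkeeping.
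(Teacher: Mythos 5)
Your proposal is correct and follows essentially the same path as the paper: the same conformal change to $\gtil = \psi^{N-2}g$ with $\scaltil \geq \epsilon > 0$, the same truncated test function $\left(\phitil^{N+1}\right)_{,k}$, the same monotone convergence passage, the same application of the Sobolev-type inequality \eqref{eqDefNu}, and the same min/max-$\psi$ bookkeeping to return to $g$. Nothing to add.
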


When $A$ enjoys more regularity, elliptic regularity can be applied leading to
the following result:

\begin{proposition}[Lemma 2.11 and Proposition 2.12 of~\cite{Pailleron}]\label{propRegLich}
    Assume that $A \in L^{2r}(M, \bR)$ for $r \in (1, p]$. Then the solution $\phi$ to the Lichnerowicz equation~\eqref{eqLichGeneral} is a strong solution that belongs to $W^{2, r}(M, \bR)$. Further $\phi^N \in L^t(M, \bR)$ with
    \[
        t =
        \left\lbrace
        \begin{aligned}
            \frac{2(n-1)r}{n-2r}             & \quad \text{if } r < \frac{n}{2}, \\
            \text{arbitrary in } (1, \infty) & \quad \text{if } r > \frac{n}{2}.
        \end{aligned}
        \right.
    \]
\end{proposition}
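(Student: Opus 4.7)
The plan is a three-step bootstrap that upgrades the weak solution produced by Proposition \ref{propLich}: iterative improvement of the integrability of $\phi^N$, then a positive pointwise lower bound on $\phi$, and finally standard $L^r$-elliptic regularity.

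The core step re-runs the truncation-and-testing scheme from the proof of Lemma \ref{lmLichnerowicz}, replacing the test function $(\phitil^{N+1})_{,k}$ by $(\phitil^{s})_{,k}$ for $s \geq N+1$. Dropping the non-negative $\tau^2 \phitil^{N-1+s}$ term, applying Hölder to pair $\Atil^2 \in L^r$ with $\phitil^{s-N-1} \in L^{r/(r-1)}$, and applying the $\gtil$-Sobolev embedding $W^{1,2} \hookrightarrow L^N$ to $\phitil^{(s+1)/2}$, one arrives at an inequality that, with $s$ chosen at each step to reuse the integrability already obtained for $\phitil$ (starting from $\phitil \in L^{N(N/2+1)}$ given by Lemma \ref{lmLichnerowicz}), yields a linear recursion $q_{k+1} = \frac{N(r-1)}{2r} q_k + \frac{N(N+2)}{2}$ for the integrability exponents. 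Its multiplier is less than one iff $r < n/2$, and its fixed point is precisely $q^* = 4n(n-1)r/[(n-2)(n-2r)]$, giving $\phi^N \in L^t$ with $t = 2(n-1)r/(n-2r)$. When $r > n/2$ the multiplier exceeds one and a standard Moser iteration closes to a uniform $L^\infty$ bound on $\phitil$ instead, so $\phi^N \in L^t$ for every finite $t$.

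A positive pointwise lower bound $\phi \geq c_0 > 0$ is obtained either by running the same scheme with negative exponents $\phitil^{-s}$, in which case the $\Atil^2/\phitil^{N+1}$ term plays the coercive role, or (in the range $r > n/2$, where $\phitil$ is already continuous) by a direct maximum-principle evaluation at its infimum of $\scaltil \phitil + \frac{n-1}{n} \tau^2 \phitil^{N-1} \geq \Atil^2/\phitil^{N+1}$. Once $\phi$ is controlled both above and below, the right-hand side of
\[
-\frac{4(n-1)}{n-2}\Delta\phi = -\scal\,\phi - \frac{n-1}{n}\tau^2 \phi^{N-1} + \frac{A^2}{\phi^{N+1}}
\]
belongs to $L^r(M)$: $\scal \in L^p \subset L^r$, $\tau \in L^\infty$ via $W^{1,q} \hookrightarrow L^\infty$ since $q > n$, and the nonlinear terms involving $\phi^{N-1}$ and $\phi^{-(N+1)}$ are uniformly bounded by the previous steps. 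Standard elliptic $L^r$ regularity on the closed manifold $(M,g)$ then yields $\phi \in W^{2, r}(M, \bR)$.

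The main obstacle is the rigorous handling of the iterated truncations $(\phitil^{\pm s})_{,k}$ as $s$ grows, together with the bookkeeping of constants needed to reach exactly the sharp exponent $t$; this is the technical heart of the argument in \cite{Pailleron}, which carries the full linear recursion rather than relying on a purely qualitative Moser iteration.
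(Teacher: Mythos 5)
The paper does not prove this proposition; it cites Lemma 2.11 and Proposition 2.12 of \cite{Pailleron}, so there is no internal proof to compare against, and your attempt has to be judged on its own. The core integrability bootstrap is sound: testing with truncations of $\phitil^{s}$, dropping the non-negative $\tau^2$-term, pairing $\Atil^{2}\in L^{r}$ against $\phitil^{\,s-N-1}\in L^{r/(r-1)}$ by H\"older and applying the $\gtil$-Sobolev inequality to $\phitil^{(s+1)/2}$ does produce the affine recursion $q_{k+1}=\tfrac{N(r-1)}{2r}q_{k}+\tfrac{N(N+2)}{2}$, whose multiplier is $<1$ exactly when $r<n/2$ and whose fixed point $q^{*}=\tfrac{4n(n-1)r}{(n-2)(n-2r)}$ gives $t=q^{*}/N=\tfrac{2(n-1)r}{n-2r}$, matching the statement. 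That part of the argument, and the dichotomy $r\lessgtr n/2$, is correct.

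There are, however, two genuine gaps. First, the pointwise lower bound on $\phi$. Testing with $\phitil^{-s}$ does not work as you describe: the gradient term $\int\<d\phitil^{-s},d\phitil\>_{\gtil}=-s\int\phitil^{-s-1}|d\phitil|^{2}_{\gtil}$ comes out with the \emph{wrong sign}, so there is no coercivity, and the right-hand side becomes $\int\Atil^{2}\phitil^{-N-1-s}$, which is exactly the quantity you are trying to control rather than a given datum. The maximum-principle alternative gives information only at points where $A\neq 0$, and $A$ is allowed to vanish. A route that actually closes is the Harnack inequality: once the positive-exponent bootstrap provides $\phitil\in L^{q_{k}}$ for a large $q_{k}$, the nonnegativity of $\Atil^{2}/\phitil^{N+1}$ makes $\phitil$ a weak non-negative supersolution of $-\Delta_{\gtil}u+Vu=0$ with $V=\tfrac{n-2}{4(n-1)}\bigl(\scaltil+\tfrac{n-1}{n}\tau^{2}\phitil^{N-2}\bigr)$ belonging to $L^{s}$ for some $s>n/2$, and Harnack for such operators on a closed manifold yields $\inf_{M}\phitil\gtrsim\int_{M}\phitil\,d\mu^{\gtil}>0$. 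Second, in the final elliptic-regularity step you claim that $\phi^{N-1}$ and $\phi^{-(N+1)}$ are ``uniformly bounded by the previous steps.'' For $r<n/2$ this is false: you only have $\phi^{N}\in L^{t}$ with $t<\infty$, not $\phi\in L^{\infty}$. What is true is that $\phi^{-(N+1)}\in L^{\infty}$ from the lower bound, so $A^{2}/\phi^{N+1}\in L^{r}$, while $\tau\in L^{\infty}$ (since $q>n$) and $\phi\in L^{Nt}$ with $Nt>(N-1)r$ put $\tau^{2}\phi^{N-1}\in L^{r}$; combined with a short supplementary bootstrap for $\scal\,\phi$, this gets the right-hand side into $L^{r}$ and hence $\phi\in W^{2,r}$. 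The conclusion survives, but the ``uniformly bounded'' shortcut does not.
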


Next, concerning the vector equation~\eqref{eqVector}, the case of manifolds
with boundary presents significantly more complexity compared to the
boundaryless case. Compare for example~\cite[Proposition 3.1]{Pailleron}
and~\cite[Proposition 5]{MaxwellNonCMC}. The reason is to be found in the fact
that the boundary conditions that are natural for the vector equation do not
interact well with Bochner's formula for the conformal Killing operator. The
result we will need is the following:

\begin{proposition}\label{propVector}
    Assume that $(M, g)$ has no non-zero conformal Killing vector field. Then the operator $\DeltaL: W^{2, q}(M, TM) \to L^q(M, TM)$ is invertible for any $q \in (1, p]$.
\end{proposition}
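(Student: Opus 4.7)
The plan is to establish invertibility via the Fredholm alternative: show that $\DeltaL$ is a formally self-adjoint second-order elliptic operator, hence Fredholm of index zero from $W^{2,q}(M, TM)$ to $L^q(M, TM)$, and then rule out any non-trivial kernel using the hypothesis that $(M, g)$ carries no non-zero conformal Killing vector field.

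Ellipticity is immediate from the definition $\DeltaL = -\frac{1}{2}\bL^* \bL$: its principal symbol at a non-zero covector $\xi$ acts on $X \in T_x M$ by
\[
X \mapsto \tfrac{1}{2}\left(|\xi|_g^2 X + \left(1 - \tfrac{2}{n}\right) \<\xi, X\>_g \xi^\sharp\right),
\]
which is positive definite for $n \geq 3$. Since $g \in W^{2,p}(M, S_2M)$ with $p > n/2$, the coefficients $g^{ij}$ are continuous and the Christoffel symbols lie in $W^{1,p}$; Sobolev multiplication, favourable precisely because $p > n/2$ and $q \leq p$, ensures that $\DeltaL$ sends $W^{2,q}(M, TM)$ continuously into $L^q(M, TM)$, and the usual Calder\'on--Zygmund a priori estimate
\[
\|W\|_{W^{2,q}} \leq C \left(\|\DeltaL W\|_{L^q} + \|W\|_{L^q}\right)
\]
holds. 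Together with the compact Rellich embedding $W^{2,q} \hookrightarrow L^q$ on the closed manifold $M$, this forces a finite-dimensional kernel and closed range. Formal self-adjointness comes from the integration-by-parts identity
\[
\int_M \<\DeltaL W, V\>_g d\mu^g = -\frac{1}{2} \int_M \<\bL W, \bL V\>_g d\mu^g,
\]
which is legal since $\partial M = \emptyset$. Elliptic regularity promotes any element of the $L^{q'}$-cokernel to $W^{2,q'}$ and identifies it with an element of the kernel, so $\DeltaL$ has Fredholm index zero.

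To conclude, if $W \in W^{2,q}(M, TM)$ satisfies $\DeltaL W = 0$, testing against $W$ itself and using the identity above yields $\int_M |\bL W|_g^2 d\mu^g = 0$, so $W$ is a conformal Killing vector field, and the hypothesis forces $W = 0$. The Fredholm alternative then yields surjectivity, and the open mapping theorem upgrades bijectivity to an isomorphism of Banach spaces.

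The main obstacle throughout is technical rather than conceptual: one must verify that the limited regularity of $g$ does not obstruct either the Calder\'on--Zygmund estimate or the bootstrap that identifies cokernel with kernel via elliptic regularity, and one must track which pairs $(q, q')$ remain inside the admissible range dictated by the coefficients. The hypotheses $p > n/2$ and $q \in (1, p]$ are precisely what is needed to close up both the Sobolev multiplication bounds and the regularity arguments.
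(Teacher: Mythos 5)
The paper does not prove Proposition \ref{propVector} at all: it states the result, refers the reader to \cite[Proposition 3.1]{Pailleron} and \cite[Proposition 2.2]{GicquaudSmallTT}, and immediately remarks that the operator norm of the inverse is hard to control because the argument is non-constructive — precisely the Fredholm/elliptic route you take. So your proposal supplies a proof the paper delegates to the literature, and its structure (elliptic symbol computation, Fredholm index zero from formal self-adjointness and the $L^q$ a priori estimate, kernel killed by the no-conformal-Killing hypothesis) is exactly the standard one those references employ.

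One step is glossed over in a way that could become a genuine gap if spelled out naively. Your cokernel argument works in $L^{q'}$ with $q' = q/(q-1)$, and for $q$ near $1$ one has $q' > p$. In that range the operator $\DeltaL$ is not bounded from $W^{2,q'}$ to $L^{q'}$ (the Sobolev multiplication estimates for the lower-order terms, which involve Christoffel symbols in $W^{1,p}$ and curvature in $L^p$, require the target exponent to be at most $p$), so you cannot directly run elliptic regularity to promote a cokernel element to $W^{2,q'}$. The standard fix is to first establish invertibility for $q$ in the symmetric range $[p/(p-1), p]$ (or simply for $q = 2$, where Lax--Milgram applies and duality is automatic since $2' = 2$), and then extend downward to $q \in (1, p/(p-1))$ by a duality or bootstrap argument that never requires second-order elliptic estimates above exponent $p$. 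Your closing remark that the hypotheses $p > n/2$ and $q \le p$ "are precisely what is needed" is true, but it quietly absorbs this asymmetry; stating the two-step scheme explicitly would make the argument airtight.

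Two small cosmetic points: the displayed principal symbol carries a factor $\tfrac{1}{2}$ that does not match the convention $\DeltaL = -\tfrac{1}{2}\bL^*\bL$ together with $|\sigma(\bL)(\xi)X|^2 = 2|\xi|^2|X|^2 + 2(1-\tfrac{2}{n})\<\xi,X\>^2$, and a sign is dropped; neither affects positivity or ellipticity, but it is worth fixing.
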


Note that, however, the existence of the inverse of $\DeltaL$ is obtained by
non-constructive methods such as the open mapping theorem. Hence, its
(operator) norm can be hard to control. We give a simpler estimate, with a
constant that can be more easily estimated, that will suffice our purpose apart
from the final issue of regularity of the solution where explicit constants are
less important.

It is a classical fact that the non-existence of conformal Killing vector
fields is equivalent to the coercivity of the quadratic form
\[
    q(V) \definedas \frac{1}{2} \int_M |\bL V|^2 d\mu^g
\]
over the space $W^{1, 2}(M, TM)$, see e.g.~\cite[Appendix
    A]{DahlGicquaudHumbert}. Hence, there exists a constant $\kappa > 0$ such that
\[
    \forall V \in W^{1, 2}(M, TM),~ \kappa \|V\|_{W^{1, 2}} \leq q(V).
\]
Applying the Sobolev embedding theorem for vector fields: $W^{1, 2}(M, TM)
    \hookrightarrow L^N(M, TM)$, we conclude that there exists a constant $\mu_0 >
    0$ such that
\[
    \forall V \in W^{1, 2}(M, TM),~ \mu_0 \|V\|_{L^N}^2 \leq q(V).
\]
The best choice for this constant is given by the Yamabe-like invariant $\mu_V$
given in \eqref{eqYamabeVector}.

We now show how Lemma~\ref{lmLichnerowicz} and the definition of the constant
$\mu_V$ in~\eqref{eqYamabeVector} imply the existence of solutions to the
conformal constraint equations when $\sigma$ has a small $L^2$-norm. We choose
for the Banach space $X$, which we will apply the Schauder fixed point on, the
space
\[
    X = L^r(M, \bR), \text{ with $r$ such that } \frac{1}{r} + \frac{1}{q} + \frac{1}{N} = 1,
\]
where $q$ was defined at the beginning of this section. A short calculation
shows that, as $q > n$, we have
\[
    r < \frac{N}{2} + 1,
\]
matching the requirements for the map $\mathrm{Lich}$ to be continuous. This
choice for $r$ is motivated by the following claim:
\begin{claim}\label{cl1}
    The mapping $\mathrm{Vect}: L^r(M, \bR) \to L^2(M, \mathring{S}_2 M)$ sending a given function $u \in L^r(M, \bR)$ to $\bL W$, where $W$ is the solution of the following vector equation
    \begin{equation}\label{eqDummy}
        \DeltaL W = \frac{n-1}{n} u d\tau
    \end{equation}
    is linear and continuous with norm not greater than $\displaystyle \frac{n-1}{n} \sqrt{\frac{2}{\mu_V}} \|d\tau\|_{L^q}$:
    \[
        \|\bL W\|_{L^2} \leq \frac{n-1}{n}\sqrt{\frac{2}{\mu_V}} \|d\tau\|_{L^q} \|u\|_{L^r}.
    \]
\end{claim}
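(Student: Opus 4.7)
The plan is to test the vector equation against $W$ itself and combine three ingredients: integration by parts to convert $\langle \DeltaL W, W\rangle$ into $q(W)$, Hölder's inequality with the triple of exponents $(r, q, N)$ chosen precisely to satisfy $\frac{1}{r} + \frac{1}{q} + \frac{1}{N} = 1$, and finally the Sobolev-type inequality $\mu_v \|W\|_{L^N}^2 \leq q(W)$ provided by \eqref{eqYamabeVector}.

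Concretely, I would first argue that the right-hand side $\frac{n-1}{n} u d\tau$ defines a continuous linear functional on $W^{1,2}(M, TM)$ under the stated regularity: by Hölder, $u d\tau$ lies in $L^{N'}$ where $N'$ is the conjugate of $N$, since $1/N' = 1/r + 1/q$ by the defining relation, and then the Sobolev embedding $W^{1,2} \hookrightarrow L^N$ handles the pairing. Combined with the coercivity of $q$ on $W^{1,2}$ (the hypothesis on conformal Killing fields), the Lax--Milgram theorem produces a unique weak solution $W \in W^{1,2}(M, TM)$, and linearity of $u \mapsto \bL W$ is immediate from linearity of the equation.

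Next I would take $W$ itself as a test function: integrating by parts, the left-hand side yields
\[
\int_M \langle \DeltaL W, W\rangle d\mu^g = \frac{1}{2}\int_M |\bL W|_g^2 d\mu^g = q(W),
\]
while Hölder's inequality with exponents $(r, q, N)$ bounds the right-hand side by
\[
\frac{n-1}{n}\|u\|_{L^r}\|d\tau\|_{L^q}\|W\|_{L^N}.
\]
Applying \eqref{eqYamabeVector} in the form $\|W\|_{L^N} \leq \sqrt{q(W)/\mu_v}$, dividing through by $\sqrt{q(W)}$ (the trivial case $\bL W \equiv 0$ being treated separately), and squaring yields
\[
q(W) \leq \frac{1}{\mu_v}\left(\frac{n-1}{n}\right)^2 \|d\tau\|_{L^q}^2 \|u\|_{L^r}^2,
\]
which is exactly the claimed estimate once $q(W)$ is rewritten as $\frac{1}{2}\|\bL W\|_{L^2}^2$.

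There is no real obstacle here; the calculation is essentially a one-line energy estimate once the three tools are lined up, and the delicate choices (the triple Hölder exponent condition, the fact that $q > n$ forces $r < N/2 + 1$) have been arranged upstream precisely so that this elementary argument closes.
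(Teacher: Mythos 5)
Your proof follows the paper's argument almost line for line: test the equation against $W$, integrate by parts to produce $q(W)$, apply Hölder with the exponent triple $(r,q,N)$, and close with the Sobolev-type inequality defining $\mu_v$; the paper even notes in a footnote that Lax--Milgram can replace Proposition~\ref{propVector} for existence, which is exactly the route you take. One small slip: since $\DeltaL = -\tfrac12\bL^*\bL$, integration by parts gives $\int_M\langle\DeltaL W, W\rangle\,d\mu^g = -\tfrac12\int_M|\bL W|^2\,d\mu^g = -q(W)$, not $+q(W)$ as you wrote, though this sign is absorbed once Hölder is applied and does not affect the final estimate.
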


\begin{proof}
    Note that $\displaystyle u d\tau \in L^t(M, T^*M)$ with $\displaystyle \frac{1}{t} = \frac{1}{r} + \frac{1}{q} = 1 - \frac{1}{N}$ so Proposition~\ref{propVector} applies to provide a unique solution $W \in W^{2, t}(M, TM)$ to~\eqref{eqDummy}\footnote{We are not forced to rely on Proposition~\ref{propVector} at this point as we could also use the Lax-Milgram theorem.}. To estimate the $L^2$-norm of $\bL W$, we contract~\eqref{eqDummy} with $W$ and integrate over $M$:
    \[
        \int_M \<W, \DeltaL W\>~d\mu^g = \frac{n-1}{n} \int_M u \<d\tau, W\>~d\mu^g.
    \]
    Integrating by parts the left hand side, we obtain:
    \[
        - \frac{1}{2} \int_M |\bL W|^2 d\mu^g = \frac{n-1}{n} \int_M u \<d\tau, W\>~d\mu^g.
    \]
    Hence, it follows from H\"older's inequality that
    \begin{align}
        \frac{1}{2} \int_M |\bL W|^2 d\mu^g
         & = -  \frac{n-1}{n} \int_M u \<d\tau, W\>~d\mu^g\nonumber                             \\
         & \leq \frac{n-1}{n} \|u\|_{L^r} \|d\tau\|_{L^q} \|W\|_{L^N}\label{eqEstimateVector0}.
    \end{align}
    From the definition~\eqref{eqYamabeVector} of the constant $\mu_V$, we have
    \[
        \|W\|_{L^N} \leq \left(\frac{1}{2\mu_V} \int_M |\bL W|^2 d\mu^g\right)^{1/2}.
    \]
    As a consequence,~\eqref{eqEstimateVector0} implies
    \[
        \frac{1}{2} \int_M |\bL W|^2 d\mu^g
        \leq \frac{n-1}{n} \|u\|_{L^r} \|d\tau\|_{L^q} \left(\frac{1}{2\mu_V} \int_M |\bL W|^2 d\mu^g\right)^{1/2}.
    \]
    Squaring this inequality and dividing both sides by $\displaystyle \frac{1}{2}
        \int_M |\bL W|^2 d\mu^g$, we get
    \[
        \frac{\mu_V}{2} \int_M |\bL W|^2 d\mu^g
        \leq \left(\frac{n-1}{n}\right)^2 \|u\|_{L^r}^2 \|d\tau\|_{L^q}^2.
    \]
    This concludes the proof of the claim.
\end{proof}

We now define the mapping $F: L^r(M, \bR) \to L^r(M, \bR)$ as follows. Given $u
    \in L^r(M, \bR)$, we let $W = \mathrm{Vect}(u)$ be the solution
to~\eqref{eqDummy} and set $F(u) = \phi^N$, where $\phi$ is the solution
to~\eqref{eqLichGeneral} with $A \definedas |\sigma + \bL W|$, i.e. $F(u) =
    \mathrm{Lich}(|\sigma + \bL W|)$. From Claim~\ref{cl1} and
Propostion~\ref{propLich}, $F$ is continuous and compact. We now prove that, if
$\sigma$ has a small enough $L^2$-norm, a certain closed ball in $L^r(M, \bR)$
is stable by $F$:
\begin{claim}\label{cl2}
    There exists an explicit $x_{\max} > 0$ given below in Equation~\eqref{eqDefSmax} such that, if
    \[
        \int_M |\sigma|^2 d\mu^g \leq  x_{\max},
    \]
    with $\sigma \not\equiv 0$, the closed ball
    \[
        \Omega \definedas \left\{u \in L^r(M, \bR), \|u\|_{L^r} \leq R_{\mathrm{opt}}\right\}
    \]
    with $R_{\mathrm{opt}}$ defined in~\eqref{eqDefR0} is stable for $F$.
\end{claim}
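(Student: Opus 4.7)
The plan is to chain the a priori bounds from Lemma \ref{lmLichnerowicz} and Claim \ref{cl1} into a single scalar inequality in $R = \|u\|_{L^r}$ and then analyse it elementarily. For $u \in L^r$ with $\|u\|_{L^r} \leq R$, set $W = \mathrm{Vect}(u)$ and let $\phi$ solve the Lichnerowicz equation with $A = |\sigma + \bL W|$. The $L^2$-triangle inequality combined with Claim \ref{cl1} yields $\|A\|_{L^2} \leq s + \alpha R$, where $s \definedas \|\sigma\|_{L^2}$ and $\alpha \definedas \sqrt{2/\mu_v}\,(n-1)\|d\tau\|_{L^q}/n$. Feeding this into Lemma \ref{lmLichnerowicz} gives $\|F(u)\|_{L^{N/2+1}} \leq \beta (s+\alpha R)^{\gamma}$, where $\beta \definedas c^{-n/(2(n-1))}$ and $\gamma \definedas n/(n-1) > 1$. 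Since $\vol_g(M) = 1$ and $r < N/2 + 1$, the embedding $L^{N/2+1} \hookrightarrow L^r$ is norm-decreasing, so the same bound controls $\|F(u)\|_{L^r}$.

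Monotonicity of the right-hand side in $R$ reduces $F(\Omega) \subset \Omega$ to exhibiting $R_{\mathrm{opt}} > 0$ satisfying the scalar inequality $\beta(s + \alpha R_{\mathrm{opt}})^{\gamma} \leq R_{\mathrm{opt}}$. To analyse this I would study $h(R) \definedas R - \beta(s+\alpha R)^{\gamma}$. Since $\gamma > 1$, $h$ is strictly concave on $[0,\infty)$, satisfies $h(0) = -\beta s^{\gamma} < 0$ (where the hypothesis $\sigma \not\equiv 0$ enters), and tends to $-\infty$ at infinity. An elementary calculation locates its unique critical point at $\alpha R^* = (\alpha\beta\gamma)^{-1/(\gamma-1)} - s$ and gives
\[
h(R^*) = \frac{1}{\alpha}\left[\frac{\gamma-1}{\gamma}(\alpha\beta\gamma)^{-1/(\gamma-1)} - s\right],
\]
so the set $\{h \geq 0\}$ is non-empty if and only if
\[
s \leq s_{\max} \definedas \frac{\gamma-1}{\gamma}(\alpha\beta\gamma)^{-1/(\gamma-1)}.
\]

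Under this smallness assumption the equation $h(R) = 0$ has two roots $R_- \leq R_+$ (collapsing to $R^*$ in the borderline case $s = s_{\max}$), and I would define $R_{\mathrm{opt}} \definedas R_-$; this smaller root yields the tightest stable ball, which is exactly what is needed for the uniqueness argument of Section \ref{secUniqueness}. Then for any $u \in \Omega$, monotonicity gives $\|F(u)\|_{L^r} \leq \beta(s + \alpha R_{\mathrm{opt}})^{\gamma} \leq R_{\mathrm{opt}}$, proving $F(\Omega) \subset \Omega$. The main obstacle is algebraic bookkeeping rather than conceptual: tracking the exponents so that the key inequality takes the clean super-linear form $\beta(s+\alpha R)^{\gamma} \leq R$, and producing closed-form expressions for $s_{\max}$ and $R_{\mathrm{opt}}$. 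The structural reason the proof works is precisely the super-linearity $\gamma > 1$, which is simultaneously what forces a threshold $s_{\max}$ and what explains why the method requires smallness of the TT-tensor $\sigma$.
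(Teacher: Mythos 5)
Your overall strategy --- chain the a priori bounds into a single scalar inequality in $R$, analyze it by concavity, and read off the smallness threshold --- is exactly the paper's, and your analysis of $h$ is internally correct. However, two concrete points of divergence mean you are proving a claim with the right shape but not the constants \eqref{eqDefR0} and \eqref{eqDefSmax} that the statement actually asserts.

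First, at the step where you bound $\|A\|_{L^2}$ you use the triangle inequality $\|A\|_{L^2} \leq \|\sigma\|_{L^2} + \|\bL W\|_{L^2}$, whereas the paper exploits the $L^2$-orthogonality $\int_M \langle \sigma, \bL W\rangle\, d\mu^g = 0$ (valid because $\sigma$ is transverse-traceless and $\bL W$ lies in the image of the conformal Killing operator) to get the exact Pythagorean identity $\|A\|_{L^2}^2 = \|\sigma\|_{L^2}^2 + \|\bL W\|_{L^2}^2$. This yields the clean scalar inequality $x + \alpha^2 R^2 \leq c R^{2(n-1)/n}$ (with $x = \|\sigma\|_{L^2}^2$) rather than your $\beta(s+\alpha R)^{\gamma} \leq R$; the two are not equivalent, and the orthogonality version is both sharper and produces the explicit formulas of the paper. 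Second, you take $R_{\mathrm{opt}}$ to be the \emph{smaller} root $R_-$ of $h=0$, but \eqref{eqDefR0} in the paper is the \emph{critical point} of $h$ (the point minimizing the paper's $h$, equivalently maximizing the admissible range of $x$). The paper even flags this distinction in the Remark after Theorem \ref{thmA}: $R_{\min}$ is of independent interest as an estimate of the solution's size, but the critical point is what is used in Claim \ref{cl2} because it gives the largest $s_{\max}$. Your heuristic that $R_-$ ``is exactly what is needed for the uniqueness argument'' is also slightly off: the uniqueness argument of Section \ref{secUniqueness} does not re-use Claim \ref{cl2}'s radius but instead derives a fresh a priori estimate (Lemma \ref{lmEstimatePhi}) on \emph{any} solution with bounded volume. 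In short, the proof mechanism is sound and equivalent in spirit, but to match the stated claim you should replace the triangle inequality by the $L^2$-orthogonality of $\sigma$ and $\bL W$ and take $R_{\mathrm{opt}}$ at the critical point rather than the lower root.
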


\begin{proof}
    Assume given $u \in L^r(M, \bR)$ such that $\|u\|_{L^r} \leq R$, where, as indicated in the statement of the claim, $R$ will be chosen below. From Claim~\ref{cl1}, we have that the solution $W$ to~\eqref{eqDummy} satisfies
    \[
        \|\bL W\|_{L^2}^2 \leq \frac{2}{\mu_V} \left(\frac{n-1}{n}\right)^2 \|d\tau\|_{L^q}^2 R^2.
    \]
    Because of the $L^2$-orthogonality between TT-tensors and tensors of the form
    $\bL V$, we have
    \begin{align*}
        \|A\|_{L^2}^2
         & = \int_M |\sigma + \bL W|^2 d\mu^g                                                                  \\
         & = \int_M |\sigma|^2 d\mu^g + \int_M |\bL W|^2 d\mu^g                                                \\
         & \leq \int_M |\sigma|^2 d\mu^g + \frac{2}{\mu_V} \left(\frac{n-1}{n}\right)^2 \|d\tau\|_{L^q}^2 R^2.
    \end{align*}
    Set $\displaystyle x \definedas \int_M |\sigma|^2 d\mu^g = \|\sigma\|_{L^2}^2$ so the previous inequality reads
    \[
        \|A\|_{L^2}^2
        \leq x + \frac{2}{\mu_V} \left(\frac{n-1}{n}\right)^2 \|d\tau\|_{L^q}^2 R^2.
    \]
    Lemma~\ref{lmLichnerowicz} then implies that $F(u) = \phi^N$, with $\phi$ the
    solution to~\eqref{eqLichGeneral}, satisfies
    \[
        \mu_L \|F(u)\|_{L^{\frac{N}{2}+1}}^{2\frac{n-1}{n}} \leq \|A\|_{L^2}^2 \leq x + \frac{2}{\mu_V} \left(\frac{n-1}{n}\right)^2 \|d\tau\|_{L^q}^2 R^2.
    \]
    Note that we used here the assumption $\sigma \not\equiv 0$ to conclude that $A
        \not\equiv 0$ as assumed in Lemma~\ref{lmLichnerowicz}. From the fact that $(M,
        g)$ has volume $1$, we have
    \[
        \|F(u)\|_{L^r} \leq \|F(u)\|_{L^{\frac{N}{2}+1}}.
    \]
    Combining the previous two estimates, we obtain:
    \[
        \mu_L \|F(u)\|_{L^r}^{2\frac{n-1}{n}} \leq x + \frac{2}{\mu_V} \left(\frac{n-1}{n}\right)^2 \|d\tau\|_{L^q}^2 R^2.
    \]
    So, we are guaranteed to have $\|F(u)\|_{L^r} \leq R$, i.e. that $\Omega$ is
    stable for $F$, provided that
    \begin{equation}\label{eqStable}
        x + \frac{2}{\mu_V} \left(\frac{n-1}{n}\right)^2 \|d\tau\|_{L^q}^2 R^2 \leq
        \mu_L R^{2\frac{n-1}{n}}.
    \end{equation}
    Now remark that the function
    \begin{equation}\label{eqDefH}
        h(R) \definedas x + \frac{2}{\mu_V} \left(\frac{n-1}{n}\right)^2 \|d\tau\|_{L^q}^2 R^2 - \mu_L R^{2\frac{n-1}{n}}
    \end{equation}
    is minimal for $R = R_{\mathrm{opt}}$ with
    \begin{equation}\label{eqDefR0}
        R_{\mathrm{opt}} = \left(\frac{2}{\mu_L \mu_V} \frac{n-1}{n} \|d\tau\|_{L^q}^2\right)^{-\frac{n}{2}}
    \end{equation}
    as it follows from computing the value for which $h'(R) = 0$. This choice for $R$ is the best possible in the sense that if~\eqref{eqStable} is not satisfied for this particular value, it is not satisfied for any other. Hence, provided that $h(R) \leq 0$, we have that $\|F(u)\|_{L^r} \leq R$ meaning that the ball $\Omega$ is stable for $F$. The function $h$ defined in~\eqref{eqDefH} depends linearly on $x$ and is decreasing for small values of $R$. Hence, if $x$ is less than $x_{\max}$, where $x_{\max}$ is given By
    \begin{equation}\label{eqDefSmax}
        x_{\max} \definedas \mu_L R_{\mathrm{opt}}^{2\frac{n-1}{n}} - \frac{2}{\mu_V} \left(\frac{n-1}{n}\right)^2 \|d\tau\|_{L^q}^2 R_{\mathrm{opt}}^2,
    \end{equation}
    we have $h(R_{\mathrm{opt}}) \leq 0$.
\end{proof}

We now apply the Schauder-Tychonoff fixed point theorem to the function $F$ and
the closed subset $\Omega$:

\begin{theorem}\label{thmSchauder}
    Let $\Omega$ be a bounded closed convex subset of a Banach space $X$. Assume that $F: \Omega \to \Omega$ is a continuous and compact mapping. Then $F$ admits a fixed point on $\Omega$.
\end{theorem}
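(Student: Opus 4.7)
The plan is to reduce Schauder's theorem to the finite-dimensional Brouwer fixed point theorem via a finite-dimensional approximation of $F$. Since $F$ is compact, the set $K \definedas \overline{F(\Omega)}$ is a compact subset of $\Omega$. The strategy is to construct, for each $\epsilon > 0$, a finite-dimensional continuous approximation $F_\epsilon : \Omega \to \Omega$ with $\|F_\epsilon(u) - F(u)\| \leq \epsilon$ for all $u \in \Omega$, produce a fixed point $u_\epsilon$ of $F_\epsilon$ using Brouwer, and then extract a subsequence converging to a fixed point of $F$.

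To construct $F_\epsilon$, I would use the \emph{Schauder projection}. By compactness of $K$, there exist finitely many points $y_1, \dots, y_N \in K$ such that $K \subset \bigcup_{i=1}^N B(y_i, \epsilon)$. Set
\[
\lambda_i(y) \definedas \max\{0, \epsilon - \|y - y_i\|\},\qquad P_\epsilon(y) \definedas \frac{\sum_{i=1}^N \lambda_i(y) y_i}{\sum_{i=1}^N \lambda_i(y)}, \qquad y \in K.
\]
The denominator is strictly positive on $K$ by the covering property, so $P_\epsilon$ is continuous. Moreover, $P_\epsilon(y)$ is a convex combination of the $y_i$'s with weights supported on those indices for which $\|y - y_i\| < \epsilon$, so $\|P_\epsilon(y) - y\| < \epsilon$. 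Since the $y_i$ lie in $\Omega$ and $\Omega$ is convex, $P_\epsilon(y) \in \Omega$. I then set $F_\epsilon \definedas P_\epsilon \circ F : \Omega \to \Omega$.

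Let $C_\epsilon \definedas \mathrm{conv}(y_1, \dots, y_N)$, a compact convex subset of the finite-dimensional affine space spanned by the $y_i$'s, which is contained in $\Omega$. Then $F_\epsilon$ maps $C_\epsilon$ into itself (and is continuous there), so Brouwer's fixed point theorem yields a point $u_\epsilon \in C_\epsilon$ with $F_\epsilon(u_\epsilon) = u_\epsilon$. By construction, $\|u_\epsilon - F(u_\epsilon)\| = \|F_\epsilon(u_\epsilon) - F(u_\epsilon)\| < \epsilon$. Taking $\epsilon = 1/k$, compactness of $K$ allows extraction of a subsequence $F(u_{k_j}) \to u^\star$ for some $u^\star \in \Omega$ (since $\Omega$ is closed). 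Then $u_{k_j} \to u^\star$ as well, and continuity of $F$ gives $F(u^\star) = u^\star$.

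The main obstacle is getting the topological ingredients right: one must check that the Schauder projection is genuinely continuous on $K$ (which hinges on the denominator not vanishing, ensured by the open cover by $\epsilon$-balls), that $F_\epsilon$ sends the finite-dimensional convex hull $C_\epsilon \subset \Omega$ into itself (which uses convexity of $\Omega$ in an essential way), and finally that the fixed-point approximations converge. Of course, the deepest input is Brouwer's theorem itself, whose proof—whether via simplicial approximation, degree theory, or a smooth homotopy argument—is the genuine hard content; everything above is soft functional-analytic packaging that upgrades finite-dimensional topology to the infinite-dimensional compact setting.
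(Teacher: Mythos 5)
Your proof is correct, but it takes a genuinely different route from the paper. You give a self-contained derivation from Brouwer's theorem: finite $\epsilon$-net of the compact set $K=\overline{F(\Omega)}$, Schauder projection $P_\epsilon$ onto the convex hull of the net, Brouwer on the finite-dimensional compact convex set $C_\epsilon$, and a limiting argument using compactness of $K$ plus continuity of $F$. All of these steps check out (in particular you correctly note that $C_\epsilon\subset\Omega$ because $\Omega$ is convex, and that $F(C_\epsilon)\subset K$ so $F_\epsilon$ maps $C_\epsilon$ into itself). The paper, by contrast, does not reprove anything: it cites the classical Schauder fixed point theorem for a continuous map on a compact convex set (\cite[Theorem 11.1]{GilbargTrudinger}), and reduces to that statement by replacing $\Omega$ with the closed convex hull $\Omega'\definedas\overline{\mathrm{conv}}(F(\Omega))$. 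Since $F(\Omega)\subset\Omega$ and $\Omega$ is closed and convex, $\Omega'\subset\Omega$, hence $F(\Omega')\subset F(\Omega)\subset\Omega'$, and $\Omega'$ is compact because the closed convex hull of a (pre)compact subset of a Banach space is compact (the paper attributes this to Krein--Milman via \cite[Theorem 3.20]{Rudin}, though it is more standardly Mazur's theorem). So the paper's proof is a two-line reduction; your proof is essentially the unfolding of the cited classical theorem. Your approach buys self-containedness and exhibits the finite-dimensional reduction explicitly; the paper's buys brevity at the cost of two nontrivial citations.
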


This form of the fixed point theorem follows from the more classical one where
$F$ is only assumed to be continuous but $\Omega$ is assumed to be convex and
compact (see e.g.~\cite[Theorem 11.1]{GilbargTrudinger}) by replacing the set
$\Omega$ by the closed convex hull $\mathrm{cl}(F(\Omega))$ of $F(\Omega)$
which is know to be compact by the Mazur's theorem on convex
hulls~\cite[Theorem 3.25]{Rudin}. As, by assumption $F(\Omega) \subset \Omega$
with $\Omega$ convex and closed, we have $\mathrm{cl}(F(\Omega)) \subset
    \Omega$ which shows that $\mathrm{cl}(F(\Omega))$ is stable by $F$.

From Claim~\ref{cl2}, if the assumption of Claim~\ref{cl2} is fulfilled,
i.e.~that~\eqref{eqXmax} holds, the ball $\Omega$ is stable for $F$. As a
consequence, from Theorem~\ref{thmSchauder}, there exists a fixed point $\phi^N
    \in L^r(M, \bR)$ for $F$. This means that, setting $W = \mathrm{Vect}(\phi^N)$,
the pair $(\phi, W)$ is a weak solution to the conformal constraint
equations~\eqref{eqLichnerowicz}-\eqref{eqVector}.

Thus, the only points that are left unproven are the fact that $\phi$ is a
strong solution to the Lichnerowicz equation~\eqref{eqLichnerowicz} and the
announced regularities for $\phi$ and $W$.

We already know that $\phi^N \in L^{\frac{N}{2}+1}(M, \bR)$ as it follows from
Proposition~\ref{propLich}. We show inductively that $\phi^N \in L^{r_k}(M,
    \bR)$ for larger and larger values of $r_k$, starting from $r_0 = \frac{N}{2} +
    1$. Assume proven that $\phi^N \in L^{r_k}(M, \bR)$ with $r_k \geq r_0$. We
have that
\[
    \DeltaL W = \frac{n-1}{n} \phi^N d\tau \in L^{s_k}(M, T^*M)
\]
with $s_k$ satisfying $\displaystyle \frac{1}{s_k} = \frac{1}{r_k} +
    \frac{1}{q} < 1$. Hence $\bL W \in W^{1, s_k}(M, \mathring{S}_2M)
    \hookrightarrow L^{s'_k}(M, \mathring{S}_2M)$, with $s'_k$ being given by
\[
    \frac{1}{s'_k} = \frac{1}{s_k} - \frac{1}{n} = \frac{1}{r_k} + \frac{1}{q} - \frac{1}{n}
\]
if $s_k < n$ and $s'_k = \infty$ if $s_k > n$. Hence, $A \in L^{t_k}(M, \bR)$
with $t_k = \min\{2p, s'_k\}$. From Proposition~\ref{propRegLich}, we conclude
that
\[
    \phi^N \in L^{r_{k+1}}(M, \bR)
\]
with $r_{k+1} = \frac{2(n-1) t_k}{n - t_k}$. Assuming that $s_k < n$ and $s'_k
    < 2p$, we get the following formula for $r_{k+1}$:
\[
    \frac{1}{r_{k+1}} = \frac{n}{2(n-1)} \left(\frac{1}{r_k} + \frac{1}{q} - \frac{1}{n}\right) - \frac{1}{2(n-1)}.
\]
So $\frac{1}{r_k}$ satisfies a linear recurrence relation converging to a
negative value. This shows that one of the assumptions $s_k < n$ or $s'_k < 2p$
must be violated for some value $k > 0$. In either case, we see that we have
$\bL W \in L^{2p}(M, \mathring{S}_2 M)$. Proposition~\ref{propRegLich} then
shows that $\phi \in W^{2, p}(M, \bR) \hookrightarrow L^\infty(M, \bR)$. In
particular, the right hand side of the vector equation~\eqref{eqVector} belongs
to $L^q(M, T^* M)$. Proposition~\ref{propVector} finally applies to show that
$W \in W^{2, p}(M, TM)$ (note that the regularity of $W$ is limited here by the
regularity of the metric $g$).

We would like to make a couple of important remarks concerning the statement of
Theorem~\ref{thmA}:
\begin{enumerate}[leftmargin=*]
    \item Firstly, the condition~\eqref{eqStable} is fulfilled for some $R > 0$ not only
          if $x = \int_M |\sigma|^2 d\mu^g$ is small but also if $\|d\tau\|_{L^q}$ is
          small. As a consequence, Theorem~\ref{thmA} can also be understood as a
          near-CMC existence result in the same vein as~\cite{ACI08}.
    \item Secondly, we took for $R_{\mathrm{opt}}$ in Claim~\ref{cl2} the optimal value,
          i.e.~the one that leads to the largest possible range for $x$. We might also be
          interested in the smallest value of $R$ such that the stability
          condition~\eqref{eqStable} is fulfilled. This gives an hint on the size of the
          solution(s) $(\phi, W)$ to the conformal constraint equations~\eqref{system}
          with the smallest norm, i.e.~such that $\|\phi^N\|_{L^r}$ is minimal. These are
          the solutions that are the most natural from an analytic perspective as
          $\sigma$ should be thought of as a sort of a source term in the
          system~\eqref{system} moving the solution away from $\phi \equiv
              0$\footnote{Note that, due to the negative exponent of $\phi$ in the
              Lichnerowicz equation, having $\phi \equiv 0$ is meaningless. However, this
              makes sense if we consider instead the Lichnerowicz equation multiplied by
              $\phi^{N+1}$.}. Due to the fact that the powers of $R$ appearing
          in~\eqref{eqStable} are $R^2$ and $R^{2\frac{n-1}{n}}$ and since
          $2\frac{n-1}{n} < 2$, we see that for small values of $R$ the term proportional
          to $R^2$ is negligible compared to the other two. As a consequence, we have
          that the smallest value $R_{\min}$ of $R$ such that~\eqref{eqStable} holds is
          approximately given by
          \begin{equation}\label{eqEstimatedNorm}
              \mu_L R_{\min}^{2\frac{n-1}{n}} \simeq x.
          \end{equation}
          The first aim of the next section is to show that solutions to the conformal constraint equations with volume bounded by a certain constant indeed satisty $\|\phi^N\|_{L^r}^{2 \frac{n-1}{n}} \lesssim x$ (see Lemma~\ref{lmEstimatePhi}).
\end{enumerate}

\section{Uniqueness of the solution under a volume bound}\label{secUniqueness}
We now come to the proof of Theorem~\ref{thmB}. First note that the existence
part of the theorem follows from the previous one. The proof of uniqueness goes
as usual by considering two solutions $(\phi_1, W_1)$ and $(\phi_2, W_2)$ to
the conformal constraint equations with volume bounded by $V_{\max}$ and
proving that they are equal. The main novel ingredient is an estimate for the
difference between $\phi_1$ and $\phi_2$ that will appear in
Proposition~\ref{propMaxPrinciple}. Note that~\cite{ACI08} also contains a
uniqueness statement. However, their hypotheses are different as they assume an
upper bound of the form $|d\tau| \leq c \min_M \tau$ for some positive constant
$c$ while we are making no assumption on $\tau$ here but on the relative size
of $\sigma$, $\essinf_M |\sigma|$ and $d\tau$.

Our first objective is to find an estimate (in some Lebesgue norm) for
solutions with small volume. This part can be carried with low regularity
assumptions, notably for $\sigma$, as in the previous section, so we continue
using the same kind of methods.

\begin{lemma}
    Let $V_{\max} > 0$ be given and let $(\phi, W)$ be a solution to the conformal constraint equations such that
    \[
        \|\phi^N\|_{L^1} = \vol_{\ghat}(M) \leq V_{\max}.
    \]
    Then we have
    \[
        \int_M |\bL W|^2 d\mu^g
        \leq \frac{2}{\mu_V} \left(\frac{n-1}{n}\right)^2 V_{\max}^{2/n} \|\phi^N\|_{L^{\frac{N}{2} + 1}}^{2\frac{n-1}{n}} \|d\tau\|_{L^n}^2.
    \]
\end{lemma}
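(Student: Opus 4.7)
The plan is to refine the argument of Claim \ref{cl1}, but instead of immediately using Hölder with the exponents $r$, $q$, $N$ on the integral $\int_M \phi^N \langle d\tau, W\rangle d\mu^g$, to use the Hölder triple $(2, n, N)$, which satisfies $\tfrac{1}{2} + \tfrac{1}{n} + \tfrac{1}{N} = 1$ since $N = \tfrac{2n}{n-2}$. The extra volume hypothesis will then enter through interpolation of $\|\phi^N\|_{L^2}$ between $L^1$ and $L^{\frac{N}{2}+1}$.

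First I would contract the vector equation $\DeltaL W = \tfrac{n-1}{n} \phi^N d\tau$ with $W$ and integrate by parts exactly as in Claim \ref{cl1}, obtaining
\[
\frac{1}{2}\int_M |\bL W|^2 d\mu^g = -\frac{n-1}{n}\int_M \phi^N \<d\tau, W\> d\mu^g
\leq \frac{n-1}{n}\|\phi^N\|_{L^2} \|d\tau\|_{L^n} \|W\|_{L^N}.
\]
Next, the definition \eqref{eqYamabeVector} of $\mu_v$ gives $\|W\|_{L^N}^2 \leq \tfrac{1}{2\mu_v}\int_M |\bL W|^2 d\mu^g$; inserting this into the previous inequality, then squaring and dividing through by $\int_M |\bL W|^2 d\mu^g$, yields
\[
\int_M |\bL W|^2 d\mu^g \leq \frac{2}{\mu_v}\left(\frac{n-1}{n}\right)^2 \|d\tau\|_{L^n}^2 \|\phi^N\|_{L^2}^2.
\]

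It remains to bound $\|\phi^N\|_{L^2}$. By the standard interpolation $\|f\|_{L^2}\le \|f\|_{L^1}^\theta \|f\|_{L^{\frac{N}{2}+1}}^{1-\theta}$ with $\tfrac{1}{2} = \theta + \tfrac{1-\theta}{N/2+1}$, and using $\tfrac{N}{2}+1 = \tfrac{2(n-1)}{n-2}$ which gives $\tfrac{1}{N/2+1} = \tfrac{n-2}{2(n-1)}$, a direct computation yields $\theta = \tfrac{1}{n}$. Combining this with $\|\phi^N\|_{L^1} = \vol_{\ghat}(M) \leq V_{\max}$ gives
\[
\|\phi^N\|_{L^2}^2 \leq V_{\max}^{2/n}\, \|\phi^N\|_{L^{\frac{N}{2}+1}}^{2(n-1)/n},
\]
and substituting this into the estimate above produces exactly the asserted bound. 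The only genuinely substantive step is the choice $\theta = 1/n$ in the interpolation; everything else is Hölder and an integration by parts already carried out in Claim \ref{cl1}.
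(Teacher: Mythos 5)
Your proof is correct and follows essentially the same route as the paper's: the same contraction of the vector equation with $W$, the same H\"older triple $(2, n, N)$, the same use of the Yamabe-like constant $\mu_v$ from \eqref{eqYamabeVector}, and the same interpolation of $\|\phi^N\|_{L^2}$ between $L^1$ and $L^{\frac{N}{2}+1}$ with exponent $\theta = 1/n$. The only cosmetic difference is that you first close the $\|\bL W\|_{L^2}^2$ estimate and then interpolate, while the paper interpolates before invoking the Claim~\ref{cl1} rearrangement; the two orderings are equivalent.
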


\begin{proof}
    As before, we multiply the vector equation~\eqref{eqVector} by $W$ and integrate over $M$ to get
    \begin{equation}\label{eqEstimateVector1}
        \begin{aligned}
            \frac{1}{2} \int_M |\bL W|^2 d\mu^g
             & = -\frac{n-1}{n} \int_M \phi^N \<d\tau, W\> d\mu^g               \\
             & \leq \frac{n-1}{n} \|\phi^N\|_{L^2} \|d\tau\|_{L^n} \|W\|_{L^N},
        \end{aligned}
    \end{equation}
    We next write
    \[
        \frac{1}{2} = \frac{1-\lambda}{1} + \frac{\lambda}{\frac{N}{2} + 1},
    \]
    with $\lambda = \frac{n-1}{n}$. Using the interpolation inequality, we get
    \[
        \|\phi^N\|_{L^2} \leq \|\phi^N\|_{L^1}^{1/n} \|\phi^N\|_{L^{\frac{N}{2} + 1}}^{\frac{n-1}{n}} \leq V_{\max}^{1/n} \|\phi^N\|_{L^{\frac{N}{2} + 1}}^{\frac{n-1}{n}}.
    \]
    Hence, the estimate~\eqref{eqEstimateVector1} implies
    \[
        \frac{1}{2} \int_M |\bL W|^2 d\mu^g
        \leq \frac{n-1}{n} V_{\max}^{1/n} \|\phi^N\|_{L^{\frac{N}{2} + 1}}^{\frac{n-1}{n}} \|d\tau\|_{L^n} \|W\|_{L^N}.
    \]
    Proceeding as in the proof of Claim~\ref{cl1}, we obtain the claimed estimate:
    \[
        \int_M |\bL W|^2 d\mu^g
        \leq \frac{2}{\mu_V} \left(\frac{n-1}{n}\right)^2 V_{\max}^{2/n} \|\phi^N\|_{L^{\frac{N}{2} + 1}}^{2\frac{n-1}{n}} \|d\tau\|_{L^n}^2.
    \]
\end{proof}

We promote the previous result to an estimate for
$\|\phi^N\|_{L^{\frac{N}{2}+1}}$. At first, note that, setting $A^2 = |\sigma +
    \bL W|^2$, we have
\begin{align*}
    \|A\|_{L^2}^2
     & = \int_M |\sigma + \bL W|^2 d\mu^g                                                                                                        \\
     & = \int_M |\sigma|^2 d\mu^g + \int_M |\bL W|^2 d\mu^g                                                                                      \\
     & \leq x + \frac{2}{\mu_V} \left(\frac{n-1}{n}\right)^2 V_{\max}^{2/n} \|\phi^N\|_{L^{\frac{N}{2} + 1}}^{2\frac{n-1}{n}} \|d\tau\|_{L^n}^2,
\end{align*}
where we used the fact that the TT-tensor $\sigma$ is $L^2$-orthogonal to $\bL W$ and were we set $x \definedas \|\sigma\|_{L^2}^2$. We can now obtain the desired estimate for $\phi^N$. From Lemma~\ref{lmLichnerowicz} together with the previous result, we get
\[
    \mu_L \|\phi^N\|_{L^{\frac{N}{2}+1}}^{2 \frac{n-1}{n}}
    \leq \|A\|_{L^2}^2 \leq x + \frac{2}{\mu_V} \left(\frac{n-1}{n}\right)^2 V_{\max}^{2/n} \|\phi^N\|_{L^{\frac{N}{2} + 1}}^{2\frac{n-1}{n}} \|d\tau\|_{L^n}^2.
\]
We have proven the following result:

\begin{lemma}\label{lmEstimatePhi}
    Assume that $(\phi, W)$ is a solution to the conformal constraint equations such that the physical volume $\vol_{\ghat}(M)$ is bounded by $V_{\max}$ with
    \begin{equation}\label{eqUpperboundVmax}
        \frac{2}{\mu_V} \left(\frac{n-1}{n}\right)^2 \|d\tau\|_{L^n}^2 V_{\max}^{2/n} < \mu_L.
    \end{equation}
    Then we have
    \begin{equation}\label{eqEstimatePhi2}
        \left(\mu_L - \frac{2}{\mu_V} \left(\frac{n-1}{n}\right)^2 \|d\tau\|_{L^n}^2 V_{\max}^{2/n}\right) \|\phi^N\|_{L^{\frac{N}{2}+1}}^{2 \frac{n-1}{n}} \leq \|\sigma\|_{L^2}^2.
    \end{equation}
\end{lemma}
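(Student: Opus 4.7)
The plan is to assemble three ingredients already in place and then do one algebraic rearrangement. The three ingredients are: (i) the bound on $\|\bL W\|_{L^2}^2$ established just before the statement, which gives control of $\|\bL W\|_{L^2}^2$ by $\frac{2}{\mu_v}\left(\frac{n-1}{n}\right)^2 V_{\max}^{2/n}\|\phi^N\|_{L^{\frac{N}{2}+1}}^{2\frac{n-1}{n}}\|d\tau\|_{L^n}^2$; (ii) the $L^2$-orthogonality between TT-tensors and tensors of the form $\bL V$, which is exactly what justifies the splitting used in the proof of Claim \ref{cl2}; and (iii) the a priori bound of Lemma \ref{lmLichnerowicz} for solutions to the Lichnerowicz equation in terms of the $L^2$-norm of the source.

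Concretely, I would first set $A \definedas |\sigma + \bL W|$ so that the Lichnerowicz equation \eqref{eqLichnerowicz} of the system takes the form \eqref{eqLichGeneral}, allowing me to invoke Lemma \ref{lmLichnerowicz} to obtain
\[
c\,\|\phi^N\|_{L^{\frac{N}{2}+1}}^{2\frac{n-1}{n}} \leq \|A\|_{L^2}^2.
\]
Next, I would expand $\|A\|_{L^2}^2 = \int_M |\sigma + \bL W|^2\, d\mu^g$. Since $\sigma$ is trace-free and $g$-divergence-free, the cross-term $\int_M \langle \sigma, \bL W\rangle\,d\mu^g$ vanishes after one integration by parts, giving $\|A\|_{L^2}^2 = \|\sigma\|_{L^2}^2 + \|\bL W\|_{L^2}^2$. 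I would then plug in the preceding lemma's estimate for $\|\bL W\|_{L^2}^2$, so that
\[
c\,\|\phi^N\|_{L^{\frac{N}{2}+1}}^{2\frac{n-1}{n}} \leq \|\sigma\|_{L^2}^2 + \frac{2}{\mu_v}\left(\frac{n-1}{n}\right)^2 V_{\max}^{2/n}\|d\tau\|_{L^n}^2 \,\|\phi^N\|_{L^{\frac{N}{2}+1}}^{2\frac{n-1}{n}}.
\]
Finally, I would absorb the last term into the left-hand side; the hypothesis \eqref{eqUpperboundVmax} guarantees that the resulting coefficient $c - \frac{2}{\mu_v}\left(\frac{n-1}{n}\right)^2\|d\tau\|_{L^n}^2 V_{\max}^{2/n}$ is strictly positive, yielding \eqref{eqEstimatePhi2}.

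There is essentially no hard obstacle: the crux of the argument, the interpolation $\|\phi^N\|_{L^2}\leq \|\phi^N\|_{L^1}^{1/n}\|\phi^N\|_{L^{N/2+1}}^{(n-1)/n}$, has already been carried out in the preceding lemma, and both the TT-orthogonality and Lemma \ref{lmLichnerowicz} are tools at hand. The only mild point of care is that Lemma \ref{lmLichnerowicz} requires $A\not\equiv 0$; this is not part of the hypothesis, but it is harmless since if $A\equiv 0$ the Lichnerowicz equation forces $\phi\equiv 0$, contradicting $\phi>0$, so the situation cannot arise for a genuine solution.
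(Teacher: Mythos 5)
Your proposal is correct and follows exactly the paper's own route: invoke Lemma \ref{lmLichnerowicz} with $A=|\sigma+\bL W|$, split $\|A\|_{L^2}^2=\|\sigma\|_{L^2}^2+\|\bL W\|_{L^2}^2$ by TT-orthogonality, insert the preceding lemma's bound on $\|\bL W\|_{L^2}^2$, and absorb the $\|\phi^N\|_{L^{\frac{N}{2}+1}}$ term using \eqref{eqUpperboundVmax}. Your aside on the case $A\equiv 0$ is a harmless extra precaution consistent with the standing assumption $\sigma\not\equiv 0$.
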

This estimate should not be overlooked. It says that, as soon as we impose an upper bound on the volume of the solutions $(\phi, W)$ to the conformal constraint equations so that~\eqref{eqUpperboundVmax} is fulfilled, the solutions automatically satisfies an estimate similar to~\eqref{eqEstimatedNorm} (with a lesser constant). In particular, as $x$ becomes small, all solutions to the conformal constraint equation with volume bounded by $V_{\max}$ are actually ``small solutions'' with their size predicted (up to some multiplicative constant) by~\eqref{eqEstimatedNorm}.

We promote the estimate~\eqref{eqEstimatePhi2} to a $L^\infty$-bound for $\bL
    W$. The reason for this will become appearant in the proof of
Theorem~\ref{thmB}. Note that we will now be using the $L^{2p}$-norm of
$\sigma$ instead of its $L^2$-norm. This is because of technical difficulties
that we discuss at the end of the paper.

\begin{proposition}\label{propEstimatePhi}
    Assume that $\delta$ is strictly positive, where $\delta$ is defined in~\eqref{eqDefDelta}. Then, there exists a constant $C > 0$ such that the following holds. For any TT-tensor $\sigma$ such that $\|\sigma\|_{L^{2p}} \leq 1$, if $(\phi, W)$ is a solution to the conformal constraint equations with volume bounded by $V_{\max}$, we have
    \[
        \left\|\bL W\right\|_{L^\infty} \leq C \|\sigma\|_{L^{2p}}^{\frac{n}{n-1}}.
    \]
    Further, under the same assumptions, for any $r \in (1, \infty)$, there exists
    a constant $C' = C'(r)$ independent of $\sigma$, $\phi$ and $W$ such that
    \[
        \left\|\phi^N\right\|_{L^r} \leq C' \|\sigma\|_{L^{2p}}^{\frac{n}{n-1}}.
    \]
\end{proposition}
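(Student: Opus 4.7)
The plan is to bootstrap the seed estimate of Lemma \ref{lmEstimatePhi} through the coupled system, tracking at every step the dependence on $\epsilon \definedas \|\sigma\|_{L^{2p}} \leq 1$. Since $(M,g)$ has unit volume, $\|\sigma\|_{L^2} \leq \epsilon$, so the hypothesis $\delta > 0$ together with Lemma \ref{lmEstimatePhi} gives the seed bound $\|\phi^N\|_{L^{N/2+1}} \leq \delta^{-n/(2(n-1))}\, \epsilon^{n/(n-1)}$.

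The key technical ingredient I would establish is a quantitative higher-order version of Proposition \ref{propRegLich}: for every $\rho \in (1, n/2) \cap (1, p]$ there exists $C(\rho) > 0$ such that any solution of the Lichnerowicz equation \eqref{eqLichGeneral} satisfies
\[
\|\phi^N\|_{L^{\beta(\rho)}} \leq C(\rho)\, \|A\|_{L^{2\rho}}^{n/(n-1)}, \qquad \beta(\rho) \definedas \frac{(N+2)\rho'}{2\rho' - N}, \quad \rho' \definedas \frac{\rho}{\rho-1},
\]
with $\beta(\rho) \to \infty$ as $\rho \to (n/2)^-$. This is obtained by rerunning the test-function argument of Lemma \ref{lmLichnerowicz} with $\bigl(\phitil^{(N+1)(1+a)}\bigr)_{,k}$ in place of $\bigl(\phitil^{N+1}\bigr)_{,k}$: one picks $a = a(\rho) \geq 0$ so that H\"older with conjugate exponents $\rho, \rho'$ on the right-hand side $\int \widetilde{A}^2 \phitil^{(N+1)a} d\mu^{\gtil}$ produces a power of $\phitil$ that can be absorbed back into the left-hand side via Sobolev. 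A direct computation then shows that the effective exponent on $\|\phi^N\|_{L^{\beta(\rho)}}$ equals $\beta(\rho)(2/N - 1/\rho') = (N+2)/N = 2(n-1)/n$, independently of $\rho$, which is the algebraic identity that preserves the scaling.

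With this ingredient at hand, the bootstrap parallels the regularity step in the proof of Theorem \ref{thmA}. Inductively, if $\|\phi^N\|_{L^{r_k}} \leq C_k \epsilon^{n/(n-1)}$, then $\tfrac{n-1}{n}\phi^N d\tau \in L^{s_k}$ with $1/s_k = 1/r_k + 1/q$ and norm $\lesssim \epsilon^{n/(n-1)}$. Proposition \ref{propVector} followed by $W^{1,s_k} \hookrightarrow L^{s'_k}$ (with $s'_k = \infty$ once $s_k > n$) transfers this bound to $\bL W$. Combining with $\|\sigma\|_{L^{t_k}} \leq \epsilon$ for $t_k \definedas \min(s'_k, 2p)$ and using $\epsilon^{n/(n-1)} \leq \epsilon$ (which holds because $\epsilon \leq 1$), one gets $\|A\|_{L^{t_k}} \lesssim \epsilon$. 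Applying the quantitative Lichnerowicz estimate with $\rho = t_k/2$ yields $\|\phi^N\|_{L^{r_{k+1}}} \lesssim \epsilon^{n/(n-1)}$ with $r_{k+1} = \beta(t_k/2) > r_k$. As in Theorem \ref{thmA}, the iteration forces $s_k > n$ in finitely many steps; picking $\rho$ close to $n/2$ in a final application of the estimate then delivers any prescribed $r \in (1, \infty)$, proving the second assertion. For the first, one chooses $r$ with $1/r + 1/q < 1/n$, so that the vector equation together with $W^{1,s} \hookrightarrow L^\infty$ for $s > n$ gives $\|\bL W\|_{L^\infty} \lesssim \epsilon^{n/(n-1)}$.

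The principal obstacle is the derivation of the quantitative higher-order Lichnerowicz estimate, and in particular the verification that the effective exponent on the left-hand side is $2(n-1)/n$ for every admissible $\rho$. Once this scaling-preserving refinement of Lemma \ref{lmLichnerowicz} is in place, the remainder is a careful bookkeeping of constants through the regularity argument already carried out at the end of the proof of Theorem \ref{thmA}.
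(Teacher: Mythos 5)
Your proposal is correct and follows the same overall skeleton as the paper's proof: seed from Lemma~\ref{lmEstimatePhi}, iterate through the vector equation, and test the Lichnerowicz equation against a high power of $\phi$, then absorb. The one genuine point of departure is the choice of exponent in the test function. You take the \emph{self-absorbing} exponent $a(\rho)$ that makes the extra factor $\|\phitil^N\|_{L^{2\ell\rho'/N}}$ land exactly on the new Lebesgue exponent $\beta(\rho)$, which produces the clean closed-form estimate $\|\phi^N\|_{L^{\beta(\rho)}}\lesssim\|A\|_{L^{2\rho}}^{n/(n-1)}$ with $\beta(\rho)=\frac{(N+2)\rho'}{2\rho'-N}$. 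The paper explicitly considers and rejects this ``optimal'' choice (the sentence after \eqref{eqDummy2}), preferring instead to set the absorbed exponent equal to the \emph{previous} $r_k$ so that the recursion $r_{k+1}=\tfrac{N}{2}(1+\tfrac1n-\tfrac1q)r_k+1-\tfrac N2$ stays linear; it then uses monotonicity of $L^r$-norms (unit volume) to perform the same absorption. Your algebraic check that the effective exponent $\beta(\rho)\bigl(\tfrac2N-\tfrac1{\rho'}\bigr)$ collapses to $\tfrac{N+2}{N}=2\tfrac{n-1}{n}$ for every admissible $\rho$ is exactly what makes both parameterizations preserve the scaling $\|\phi^N\|\sim\epsilon^{n/(n-1)}$; and your final step (letting $\rho\uparrow n/2$ once $A\in L^{2p}$, $p>n/2$) is the same as the paper's ``take $\ell\to\infty$.'' What your packaging buys is a transparent intermediate lemma replacing the paper's recursion bookkeeping; what the paper's choice buys is a visibly geometric growth rate for $(r_k)$ without having to argue separately that the iterates $\beta(t_k/2)$ increase. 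Both routes need the same caveats you acknowledge: the division by $\|\phi^N\|_{L^\beta}^{2\ell/N}$ is licit because the qualitative $W^{2,p}\hookrightarrow L^\infty$ regularity from Theorem~\ref{thmA} already guarantees finiteness, and $\epsilon\le1$ is used to flatten $\epsilon^{n/(n-1)}\le\epsilon$ when closing the induction.
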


\begin{proof}
    The proof is done by a bootstrap argument akin to the one in the proof of Theorem~\ref{thmA}. We will make the simplifying assumption that $\scal \geq \epsilon > 0$. The general case can be handled as in Proposition~\ref{propLich}. From Lemma~\ref{lmEstimatePhi}, we have a bound on $\|\phi^N\|_{L^{\frac{N}{2}+1}}$:
    \[
        \|\phi^N\|_{L^{\frac{N}{2}+1}}^{2 \frac{n-1}{n}} \leq \delta^{-1} \|\sigma\|_{L^2}^2 \leq \delta^{-1} \|\sigma\|_{L^{2p}}^2.
    \]
    We set $r_0 = \frac{N}{2} + 1$ and construct inductively an increasing sequence
    $(r_k)_k$ together with estimates for $\phi^N$ in $L^{r_k}(M, \bR)$:
    \begin{equation}\label{eqInduction}
        \|\phi^N\|_{L^{r_k}}^{\frac{n-1}{n}} \leq \mu_k \|\sigma\|_{L^{2p}},
    \end{equation}
    for some constant $\mu_k > 0$ independent of $\sigma$, $\phi$ and $W$. The previous estimate shows that it holds true for $k = 0$ with $\mu_0 = \delta^{-1/2}$. From Proposition~\ref{propVector}, we have
    \begin{equation}\label{eqInductionW}
        \|W\|_{W^{2, s_k}} \lesssim \|\phi^N\|_{L^{r_k}} \|d\tau\|_{L^q}
    \end{equation}
    where $s_k$ is such that
    \[
        \frac{1}{s_k} = \frac{1}{r_k} + \frac{1}{q}.
    \]
    Assuming that $s_k < n$, we deduce that, for some constant $C_k$,
    \[
        \|\bL W\|_{L^{t_k}} \leq C_k \|\phi^N\|_{L^{r_k}} \|d\tau\|_{L^q},
    \]
    where $t_k$ is defined as follows:
    \[
        \frac{1}{t_k} = \frac{1}{s_k} - \frac{1}{n} = \frac{1}{r_k} + \frac{1}{q} - \frac{1}{n}.
    \]
    Setting $A \definedas |\sigma + \bL W|$, we have
    \[
        \|A\|_{L^{t_k}}
        \leq \|\sigma\|_{L^{t_k}} + \|\bL W\|_{L^{t_k}}
        \lesssim \|\sigma\|_{L^{2p}} + \|\phi^N\|_{L^{r_k}} \|d\tau\|_{L^q}.
    \]
    We next choose $\psi = \phi^{N+1+2\ell}$ as a test function for the
    Lichnerowicz equation~\eqref{eqLichGeneral} for some $\ell \geq 0$ to be chosen
    later. Proceeding as in the proof of Proposition~\ref{propLich}, we get
    \[
        \int_M \left[- \frac{4(n-1)}{n-2} \frac{N+1+2\ell}{\left(\frac{N}{2} + 1 + \ell\right)^2} \left|d \phi^{\frac{N}{2}+1+\ell}\right|^2 + \scal~\phi^{N+2+2\ell}\right] d\mu^g \leq \int_M A^2 \phi^{2\ell} d\mu^g.
    \]
    From the Sobolev embedding, we obtain that
    \[
        \left\|\phi^{\frac{N}{2}+1+\ell}\right\|_{L^N}^2
        \lesssim \int_M A^2 \phi^{2\ell} d\mu^g
        \leq \|A\|_{L^{t_k}}^2 \|\phi^{2\ell}\|_{L^\beta},
    \]
    where $\beta$ satisfies
    \[
        1 = \frac{1}{\beta} + \frac{2}{t_k}.
    \]
    Rearranging to have $\phi^N$ appearing everywhere, we obtain
    \begin{equation}\label{eqDummy2}
        \left\|\phi^N\right\|_{L^{\frac{N}{2}+1+\ell}}^{\frac{N+2+2\ell}{N}}
        \lesssim \|A\|_{L^{t_k}}^2 \|\phi^N\|_{L^{\frac{2\ell\beta}{N}}}^{\frac{2\ell}{N}},
    \end{equation}
    We now choose $\ell$. The optimal choice would be the one such that $\frac{N}{2} + 1 + \ell = \frac{2 \beta}{N}\ell$ but this choice leads to a complicated formula for $r_{k+1}$ in terms of $r_k$ so it seems wiser to choose $\ell$ so that $\frac{2\beta}{N} \ell = r_k$. Some simple calculations lead to the following recurrence relation for $r_k$:
    \[
        r_{k+1} = \frac{N}{2} \left(1 + \frac{1}{n} - \frac{1}{q}\right) r_k + 1 - \frac{N}{2}.
    \]
    In particular, we remark that, as $q > n$, we have
    \[
        r_{k+1} - 1 \geq \frac{N}{2} \left(r_k - 1\right).
    \]
    Since $r_0 = \frac{N}{2} + 1 > 1$, we conclude that the sequence $(r_k)_k$
    grows exponentially fast. In particular, we have $r_{k+1} \geq r_k$.
    Consequently, in the estimate~\eqref{eqDummy2}, we have
    \[
        \left\|\phi^N\right\|_{L^{r_{k+1}}}^{\frac{N+2+2\ell}{N}}
        \lesssim \|A\|_{L^{t_k}}^2 \|\phi^N\|_{L^{r_{k+1}}}^{\frac{2\ell}{N}},
    \]
    which, after simplification, yields
    \[
        \left\|\phi^N\right\|_{L^{r_{k+1}}}^{2\frac{n-1}{n}} = \left\|\phi^N\right\|_{L^{r_{k+1}}}^{\frac{N+2}{N}} \lesssim \|A\|_{L^{t_k}}^2 \lesssim \|\sigma\|_{L^{2p}}^2 + \|d\tau\|_{L^q}^2 \|\phi^N\|_{L^{r_k}}^2.
    \]
    By induction, we have the estimate~\eqref{eqInduction}. Hence, we get
    \[
        \left\|\phi^N\right\|_{L^{r_{k+1}}}^{2\frac{n-1}{n}} = \left\|\phi^N\right\|_{L^{r_{k+1}}}^{\frac{N+2}{N}} \lesssim \|A\|_{L^{t_k}}^2 \leq 2\|\sigma\|_{L^{2p}}^2 + 2 C_k^2 \|d\tau\|_{L^q}^2 \left(\mu_k \|\sigma\|_{L^{2p}}\right)^{\frac{2n}{n-1}}.
    \]
    As we assumed that $\|\sigma\|_{L^{2p}} \leq 1$, we can bound the right term as
    follows:
    \[
        \left\|\phi^N\right\|_{L^{r_{k+1}}}^{2\frac{n-1}{n}} \lesssim \|\sigma\|_{L^{2p}}^2 + \mu_k^{\frac{2n}{n-1}}\|\sigma\|_{L^{2p}}^2.
    \]
    This completes the induction argument as we have shown that
    $\left\|\phi^N\right\|_{L^{r_{k+1}}}^{2\frac{n-1}{n}} \lesssim
        \|\sigma\|_{L^{2p}}^2$.

    As $(r_k)_k$ grows exponentially fast, there exists an index $k_0$ so that
    \[
        \frac{1}{s_k} = \frac{1}{r_k} + \frac{1}{q} < \frac{1}{n}.
    \]
    For this index $k_0$, the estimate~\eqref{eqInductionW} remains valid but, as
    $s_k > n$, we have $\bL W \in W^{1, s_k}(M, \Sring_2 M) \hookrightarrow
        L^\infty(M, \Sring_2 M)$. From the previous estimates, we conclude
    \[
        \|\bL W\|_{L^\infty}^{\frac{n-1}{n}} \lesssim \|\sigma\|_{L^{2p}}.
    \]
    We now return to the estimate~\eqref{eqDummy2} and note that we can replace
    $t_k$ by $2p$ and, hence, let $\beta$ be chosen so that $1 = \frac{1}{\beta} +
        \frac{1}{2p}$:
    \[
        \left\|\phi^N\right\|_{L^{\frac{N}{2}+1+\ell}}^{\frac{N+2+2\ell}{N}}
        \lesssim \|A\|_{L^{2p}}^2 \|\phi^N\|_{L^{\frac{2\ell\beta}{N}}}^{\frac{2\ell}{N}}
    \]
    Estimating as before $\|\phi^N\|_{L^{\frac{2\ell}{N}}}$ from above by
    $\left\|\phi^N\right\|_{L^{\frac{N}{2}+1+\ell}}$, we obtain
    \[
        \left\|\phi^N\right\|_{L^{\frac{N}{2}+1+\ell}}^{\frac{N+2}{N}}
        \lesssim \|A\|_{L^{2p}}^2.
    \]
    As we can choose $\ell$ as large as we want, this concludes the proof of the
    proposition.
\end{proof}

\begin{corollary}\label{corLowerBound}
    Let the assumptions of the previous proposition hold. Suppose in addition that
    \[
        \essinf_M |\sigma| \geq \theta \|\sigma\|_{L^{2p}}.
    \]
    Then, for \( \|\sigma\|_{L^{2p}} \) sufficiently small, any solution \( (\phi,
    W) \) to the conformal constraint equations with volume bounded by \( V_{\max}
    \) satisfies the lower bound
    \[
        A \definedas |\sigma + \bL W| \geq \frac{\theta}{2} \|\sigma\|_{L^{2p}}\quad \text{a.e.}.
    \]
\end{corollary}

\begin{proof}
    From Proposition~\ref{propEstimatePhi}, we have the pointwise estimate
    \[
        A \geq |\sigma| - |\bL W| \geq \theta \|\sigma\|_{L^{2p}} - C \|\sigma\|_{L^{2p}}^{\frac{n}{n-1}}.
    \]
    So the announced estimate follows as soon as we choose $\sigma$ so small that
    $\displaystyle C \|\sigma\|_{L^{2p}}^{\frac{1}{n-1}} \leq \frac{\theta}{2}$.
\end{proof}

We next prove an estimate for the difference between two solutions of the
Lichnerowicz equation:

\begin{proposition}\label{propMaxPrinciple}
    We have
    \begin{equation}\label{eqLichDiffInt0}
        \mu_L \left\|\phi_1 - \phi_2\right\|_{L^{N \left(\frac{N}{2}+1\right)}}^{N+2}
        \leq \int_M \left|A_1^{\frac{2}{N+2}} - A_2^{\frac{2}{N+2}}\right|^{N+2} d\mu^g,
    \end{equation}
    where $\mu_L > 0$ is the constant defined in Lemma~\ref{lmLichnerowicz}.
\end{proposition}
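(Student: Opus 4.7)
The plan is to mirror the proof of Lemma~\ref{lmLichnerowicz} as closely as possible. First, I would reduce to the case $\scaltil \geq \epsilon > 0$ via the same conformal rescaling $\gtil = \psi^{N-2} g$, setting $\phitil_i \definedas \psi^{-1}\phi_i$ and $\Atil_i \definedas \psi^{-N} A_i$; the ultimate conversion back to $g$ will then produce the constant $c = \nu (\min_M \psi / \max_M \psi)^N$ of Lemma~\ref{lmLichnerowicz}. I would then focus on the set $\{\phitil_1 > \phitil_2\}$, let $v \definedas (\phitil_1 - \phitil_2)_+$, and use the superadditivity inequality $(a+b)^p \geq a^p + b^p$ (valid for $p \geq 1$, $a,b \geq 0$) applied with $p = N-1$ to obtain $\phitil_1^{N-1} - \phitil_2^{N-1} \geq v^{N-1}$ on $\{v > 0\}$. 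Subtracting the two Lichnerowicz equations \eqref{eqLichGeneral2} and dropping the nonnegative difference of the $\tau^2$ terms on the left yields the differential inequality
\[
- \frac{4(n-1)}{n-2} \Delta_{\gtil} v + \scaltil\, v + \frac{n-1}{n}\tau^2 v^{N-1} \leq \frac{\Atil_1^2}{\phitil_1^{N+1}} - \frac{\Atil_2^2}{\phitil_2^{N+1}}
\]
in the weak sense on $\{v > 0\}$.

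\medskip

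Next, I would repeat the calculation of Lemma~\ref{lmLichnerowicz} verbatim with $v$ in place of $\phitil$: test this inequality against the truncation $(v^{N+1})_{,k}$, integrate by parts on the Laplacian term to extract a factor $\left|d v^{(N+2)/2}\right|^2_{\gtil}$, discard the nonnegative $\tau^2$ contribution, and let $k \to \infty$ by monotone convergence. The Sobolev-type estimate \eqref{eqDefNu} applied to $u = v^{(N+2)/2}$ then produces
\[
\nu \left(\int_M v^{N(N/2+1)} d\mu^{\gtil}\right)^{2/N} \leq \int_{\{\phitil_1 > \phitil_2\}} \left(\frac{\Atil_1^2}{\phitil_1^{N+1}} - \frac{\Atil_2^2}{\phitil_2^{N+1}}\right) v^{N+1} d\mu^{\gtil},
\]
i.e., $\nu \|v\|_{L^{N(N/2+1)}}^{N+2}$ on the left.

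\medskip

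The hard part will be to bound this right-hand side by $\int_M |\Atil_1^{2/(N+2)} - \Atil_2^{2/(N+2)}|^{N+2} d\mu^{\gtil}$. On $\{\phitil_1 > \phitil_2\}$, the elementary inequality $\frac{\Atil_1^2}{\phitil_1^{N+1}} - \frac{\Atil_2^2}{\phitil_2^{N+1}} \leq \frac{\Atil_1^2 - \Atil_2^2}{\phitil_1^{N+1}}$ (from $\phitil_2^{-(N+1)} > \phitil_1^{-(N+1)}$) reduces matters to controlling $\int (\Atil_1^2 - \Atil_2^2)(v/\phitil_1)^{N+1} d\mu^{\gtil}$. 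The naive bound $v/\phitil_1 \leq 1$ is not sufficient, since $|a^{N+2} - b^{N+2}| \geq |a-b|^{N+2}$ in general; the sharper estimate must instead exploit the identity $v/\phitil_1 = 1 - \phitil_2/\phitil_1$ together with a careful factorisation of $\Atil_1^2 - \Atil_2^2 = (\Atil_1^{2/(N+2)})^{N+2} - (\Atil_2^{2/(N+2)})^{N+2}$, presumably through a Young-type inequality that balances the weight $(v/\phitil_1)^{N+1}$ against the factor $\Atil_1^{2(N+1)/(N+2)}$ arising from the mean value estimate on $x \mapsto x^{N+2}$. Once this key step is settled, the symmetric argument on $\{\phitil_2 > \phitil_1\}$, combined with the concavity inequality $(A+B)^{2/N} \leq A^{2/N} + B^{2/N}$ (valid since $2/N \leq 1$ for $n \geq 3$), gives the estimate for the full norm $\|\phitil_1 - \phitil_2\|_{L^{N(N/2+1)}}^{N+2}$; the final conversion back to the metric $g$ then produces the constant $c$ from Lemma~\ref{lmLichnerowicz} and completes the proof.
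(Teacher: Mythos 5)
The first half of your plan — conformal reduction, testing the subtracted equation against $(v^{N+1})_{,k}$, integration by parts, dropping the $\tau^2$ term, passing to the limit in $k$, and applying the Sobolev inequality \eqref{eqDefNu} — matches the paper's proof exactly. Your observation about combining the contributions from $\{\phitil_1 > \phitil_2\}$ and $\{\phitil_2 > \phitil_1\}$ also works (the paper adds the two quadratic forms and then applies Sobolev once, so it avoids your concavity step, but both routes are fine).

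The genuine gap is in your treatment of the right-hand side. The inequality
\[
\frac{\Atil_1^2}{\phitil_1^{N+1}} - \frac{\Atil_2^2}{\phitil_2^{N+1}} \;\leq\; \frac{\Atil_1^2 - \Atil_2^2}{\phitil_1^{N+1}}
\]
is too lossy and cannot be repaired afterwards. Once you replace $\phitil_2^{-(N+1)}$ by $\phitil_1^{-(N+1)}$, the only remaining dependence on $t \definedas \phitil_2/\phitil_1$ is through the factor $(1-t)^{N+1}$, and the quantity you must bound becomes $(\Atil_1^2 - \Atil_2^2)(1-t)^{N+1}$. Taking the supremum over $t \in [0,1]$ yields $\Atil_1^2 - \Atil_2^2 = a^{N+2} - b^{N+2}$ with $a \definedas \Atil_1^{2/(N+2)}$, $b \definedas \Atil_2^{2/(N+2)}$, which is strictly \emph{larger} than the target $(a-b)^{N+2}$ whenever $a > b > 0$ (by superadditivity of $x \mapsto x^{N+2}$). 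No Young-type or mean-value manipulation of $(a^{N+2}-b^{N+2})(1-t)^{N+1}$ can close this, because there is no pointwise relation between $t$ (a ratio of conformal factors) and the ratio $b/a$ (a ratio of source terms). The information you discarded — the factor $t^{-(N+1)}$ weighting $\Atil_2^2$ — is precisely what makes the argument work: the paper keeps the full expression
\[
f(t) \definedas (1-t)^{N+1}\left[\Atil_1^2 - \frac{\Atil_2^2}{t^{N+1}}\right],
\]
computes $f'(t) = (N+1)(1-t)^N\bigl[-\Atil_1^2 + \Atil_2^2 t^{-(N+2)}\bigr]$, and finds the unique interior maximum at $t_{\max} = (\Atil_2/\Atil_1)^{2/(N+2)}$, where one checks directly that $f(t_{\max}) = \bigl(\Atil_1^{2/(N+2)} - \Atil_2^{2/(N+2)}\bigr)^{N+2}$ (and $f(t) \le f(1) = 0$ when $\Atil_2 > \Atil_1$). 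As $t \to 0^+$ the term $-\Atil_2^2/t^{N+1}$ drives $f$ to $-\infty$; as $t \to 1^-$ the prefactor kills $f$. It is exactly this competition, destroyed by your first inequality, that produces the sharp constant. You should therefore not split off $\phitil_2^{-(N+1)}$; instead, write the integrand of the right-hand side directly as $f(t)$ and optimize over $t$.
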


\begin{proof}
    We subtract the equations satisfied by $\phitil_1$ and $\phitil_2$, where $\phitil_i = \psi^{-1} \phi_i$ with $\psi$ as defined below Proposition~\ref{propLich}, namely
    \[
        - \frac{4(n-1)}{n-2} \Delta_{\gtil} \phitil_i + \scaltil~\phitil_i + \frac{n-1}{n} \tau^2 \phitil_i^{N-1} = \frac{\Atil_i^2}{\phitil_i^{N+1}},
    \]
    with $\Atil_i = \psi^{-N} |\sigma + \bL W_i|$. We obtain
    \[
        - \frac{4(n-1)}{n-2} \Delta_{\gtil} (\phitil_1 - \phitil_2) + \scaltil(\phitil_1 - \phitil_2) + \frac{n-1}{n} \tau^2 \left(\phitil_1^{N-1} - \phitil_2^{N-1}\right) = \frac{\Atil_1^2}{\phitil_1^{N+1}} - \frac{\Atil_2^2}{\phitil_2^{N+1}}.
    \]
    We multiply this equation by $(\phitil_1 - \phitil_2)^{N+1}_{0,}$, where we
    denote by $u_{0,}$ the positive part of $u$: $u_{0,} = \max\{u, 0\}$, in
    agreement with the notation from~\cite{Pailleron}. This operation is now
    allowed as we proved that $\phitil_1, \phitil_2 \in W^{2, p}(M, \bR)$.
    Integrating over $M$, we get
    \begin{equation}\label{eqLichDiffInt}
        \begin{aligned}
             & \int_M (\phitil_1 - \phitil_2)^{N+1}_{0,} \left[- \frac{4(n-1)}{n-2} \Delta_{\gtil} (\phitil_1 - \phitil_2) + \scaltil(\phitil_1 - \phitil_2)\right] d\mu^{\gtil} \\
             & \qquad = \int_M (\phitil_1 - \phitil_2)^{N+1}_{0,} \left[\frac{\Atil_1^2}{\phitil_1^{N+1}} - \frac{\Atil_2^2}{\phitil_2^{N+1}}\right] d\mu^{\gtil}                \\
             & \qquad\qquad  - \int_M (\phitil_1 - \phitil_2)^{N+1}_{0,} \left[\frac{n-1}{n} \tau^2 \left(\phitil_1^{N-1} - \phitil_2^{N-1}\right)\right] d\mu^{\gtil}.
        \end{aligned}
    \end{equation}
    Note that the second term in the right hand side is (pointwise) non-negative:
    \[
        \int_M (\phitil_1 - \phitil_2)^{N+1}_{0,} \left[\frac{n-1}{n} \tau^2 \left(\phitil_1^{N-1} - \phitil_2^{N-1}\right)\right] d\mu^{\gtil} \geq 0,
    \]
    since if $\phitil_1 - \phitil_2 \leq 0$, the integrand vanishes while if
    $\phitil_1 - \phitil_2 > 0$, both $(\phitil_1 - \phitil_2)^{N+1}_{0,}$ and
    $\phitil_1^{N-1} - \phitil_2^{N-1}$ are positive. Hence,
    Equation~\eqref{eqLichDiffInt} implies
    \[
        \begin{aligned}
             & \int_M \left[\frac{4(n-1)}{n-2} \left\<d (\phitil_1 - \phitil_2)^{N+1}_{0,}, d(\phitil_1 - \phitil_2)\right\>_{\gtil} + \scaltil(\phitil_1 - \phitil_2)^{N+2}_{0,}\right] d\mu^{\gtil} \\
             & \qquad \leq \int_M (\phitil_1 - \phitil_2)^{N+1}_{0,} \left[\frac{\Atil_1^2}{\phitil_1^{N+1}} - \frac{\Atil_2^2}{\phitil_2^{N+1}}\right] d\mu^{\gtil}.
        \end{aligned}
    \]
    The left hand side can be reorganized to yield
    \begin{equation}\label{eqLichDiffInt2}
        \begin{aligned}
             & \int_M \left[\frac{3n-2}{n-1} \left\<d (\phitil_1 - \phitil_2)^{\frac{N}{2}+1}_{0,}, d(\phitil_1 - \phitil_2)^{\frac{N}{2}+1}_{0,}\right\>_{\gtil} + \scaltil(\phitil_1 - \phitil_2)^{N+2}_{0,}\right] d\mu^{\gtil} \\
             & \qquad \leq \int_M (\phitil_1 - \phitil_2)^{N+1}_{0,} \left[\frac{\Atil_1^2}{\phitil_1^{N+1}} - \frac{\Atil_2^2}{\phitil_2^{N+1}}\right] d\mu^{\gtil}.
        \end{aligned}
    \end{equation}
    The idea is now to get an upper bound for the right hand side of~\eqref{eqLichDiffInt2}. To do so, we let $t \definedas \frac{\phitil_2}{\phitil_1}$ and note that, where ever $\phitil_1 \geq \phitil_2$, we have
    \[
        (\phitil_1 - \phitil_2)^{N+1}_{0,} \left[\frac{\Atil_1^2}{\phitil_1^{N+1}} - \frac{\Atil_2^2}{\phitil_2^{N+1}}\right] = (1 - t)^{N+1} \left[\Atil_1^2 - \frac{\Atil_2^2}{t^{N+1}}\right].
    \]
    As $\phitil_2 > 0$, $t$ ranges in $(0, 1]$. The function
    \[
        f(t) \definedas (1 - t)^{N+1} \left[\Atil_1^2 - \frac{\Atil_2^2}{t^{N+1}}\right]
    \]
    is increasing on the interval $(0, t_{\max}]$ with $t_{\max} =
        \left(\frac{\Atil_2}{\Atil_1}\right)^{\frac{1}{N+2}}$ and decreasing on the
    interval $[t_{\max}, \infty)$. Hence, we have
    \[
        f(t) \leq \left\lbrace
        \begin{aligned}
            f(t_{\max}) & = \left|\Atil_1^{\frac{2}{N+2}} - \Atil_2^{\frac{2}{N+2}}\right|^{N+2} & \qquad \text{if $\Atil_2 \leq \Atil_1$}, \\
            f(1)        & = 0                                                                    & \qquad \text{if $\Atil_2 > \Atil_1$}.
        \end{aligned}
        \right.
    \]
    As a consequence, the estimate~\eqref{eqLichDiffInt2} implies
    \[
        \begin{aligned}
             & \int_M \left[\frac{3n-2}{n-1} \left\<d (\phitil_1 - \phitil_2)^{\frac{N}{2}+1}_{0,}, d(\phitil_1 - \phitil_2)^{\frac{N}{2}+1}_{0,}\right\>_{\gtil} + \scaltil(\phitil_1 - \phitil_2)^{N+2}_{0,}\right] d\mu^{\gtil} \\
             & \qquad \leq \int_M \mathds{1}_{\{\Atil_1 \geq \Atil_2\}}\left[\Atil_1^{\frac{2}{N+2}} - \Atil_2^{\frac{2}{N+2}}\right] d\mu^{\gtil}.
        \end{aligned}
    \]
    Permutting the indices $1$ and $2$ and adding the corresponding estimate to the
    previous one, we obtain
    \[
        \begin{aligned}
             & \int_M \left[\frac{3n-2}{n-1} \left\<d (\phitil_1 - \phitil_2)^{\frac{N}{2}+1}, d(\phitil_1 - \phitil_2)^{\frac{N}{2}+1}\right\>_{\gtil} + \scaltil(\phitil_1 - \phitil_2)^{N+2}\right] d\mu^{\gtil} \\
             & \qquad \leq \int_M \left|\Atil_1^{\frac{2}{N+2}} - \Atil_2^{\frac{2}{N+2}}\right| d\mu^{\gtil}.
        \end{aligned}
    \]
    From the definition of the Sobolev constant $\nu$ in~\eqref{eqDefNu}, we
    conclude that
    \[
        \nu \left(\int_M (\phitil_1 - \phitil_2)^{N\left(\frac{N}{2}+1\right)} d\mu^{\gtil}\right)^{2/N} \leq \int_M \left|\Atil_1^{\frac{2}{N+2}} - \Atil_2^{\frac{2}{N+2}}\right| d\mu^{\gtil}.
    \]
    We now get back to the reference metric $g$:
    \[
        \nu \left(\int_M \psi^{-N^2/2} (\phi_1 - \phi_2)^{N\left(\frac{N}{2}+1\right)} d\mu^{\gtil}\right)^{2/N}
        \leq \int_M \psi^{-N} \left|A_1^{\frac{2}{N+2}} - A_2^{\frac{2}{N+2}}\right| d\mu^g.
    \]
    Thus,
    \[
        \nu \left(\frac{\max_M \psi}{\min_M \psi}\right)^N \left(\int_M (\phi_1 - \phi_2)^{N\left(\frac{N}{2}+1\right)} d\mu^{\gtil}\right)^{2/N}
        \leq \int_M \left|A_1^{\frac{2}{N+2}} - A_2^{\frac{2}{N+2}}\right| d\mu^g.
    \]
    The constant on the left is nothing but the constant $\mu_L$ defined in
    Lemma~\ref{lmLichnerowicz}. This concludes the proof of the proposition.
\end{proof}

We now have all the ingredients to prove Theorem~\ref{thmB}. We do this in a
series of claims. In what follows, we assume given two solutions $(\phi_1,
    W_1)$, $(\phi_2, W_2)$ to the conformal constraint equations with volume
bounded by $V_{\max}$, with $V_{\max}$ such that $\delta > 0$, where $\delta$
is as in~\eqref{eqDefDelta}. Set $A_i = |\sigma + \bL W_i|$ ($i=1, 2$). We also
assume that $\|\sigma\|_{L^{2p}}$ is small enough so that the assumptions of
Corollary~\ref{corLowerBound} are fulfilled.\\

From the previous proposition, we recall the estimate~\eqref{eqLichDiffInt0}:
\begin{equation}\label{eqLichDiffInt1}
    \left\|\phi_1 - \phi_2\right\|_{L^{N \left(\frac{N}{2}+1\right)}}
    \lesssim \left\|A_1^{\frac{2}{N+2}} - A_2^{\frac{2}{N+2}}\right\|_{L^{N+2}}.
\end{equation}

Our first task is to rework the right hand side of~\eqref{eqLichDiffInt0}:

\setcounter{claim}{0}
\renewcommand{\theclaim}{\arabic{claim}'}
\begin{claim}\label{cl1p}
    There exists an explicit constant $\kappa = \kappa(\theta)$ such that the following estimate holds independently of $\sigma$, $W_1$ and $W_2$:
    \[
        \left\|A_1^{\frac{2}{N+2}} - A_2^{\frac{2}{N+2}}\right\|_{L^{N+2}} \leq \kappa \|\sigma\|_{L^{2p}}^{-\frac{n}{2(n-1)}} \|\bL W_1 - \bL W_2\|_{L^{N+2}}.
    \]
\end{claim}

\begin{proof}
    We start by using the mean value theorem for the function $y \mapsto y^{\frac{2}{N+2}}$ on the interval $[A_2, A_1]$. We obtain
    \[
        \left|A_1^{\frac{2}{N+2}} - A_2^{\frac{2}{N+2}}\right| \leq \frac{2}{N+2} \left(\min\{A_1, A_2\}\right)^{-\frac{N}{N+2}} |A_1 - A_2|.
    \]
    From Corollary~\ref{corLowerBound}, we can estimate the minimum of $A_1$ and
    $A_2$ from below:
    \[
        \left|A_1^{\frac{2}{N+2}} - A_2^{\frac{2}{N+2}}\right| \leq \frac{2}{N+2} \left(\frac{\theta}{2} \|\sigma\|_{L^{2p}}\right)^{-\frac{N}{N+2}} |A_1 - A_2|.
    \]
    Hence, setting $\kappa \definedas \frac{2}{N+2}
        \left(\frac{\theta}{2}\right)^{-\frac{N}{N+2}}$, we have obtained
    \[
        \left|A_1^{\frac{2}{N+2}} - A_2^{\frac{2}{N+2}}\right|^{N+2} \leq \kappa^{N+2} \|\sigma\|_{L^{2p}}^{-N} |A_1 - A_2|^{N+2}.
    \]
    As a consequence, we get
    \[
        \left\|A_1^{\frac{2}{N+2}} - A_2^{\frac{2}{N+2}}\right\|_{L^{N+2}} \leq \kappa \|\sigma\|_{L^{2p}}^{-\frac{n}{2(n-1)}} \|A_1 - A_2\|_{L^{N+2}}.
    \]
    Finally, from the definition of $A_i$ and the triangle inequality, we remark
    that
    \[
        |A_1 - A_2| = \left||\sigma + \bL W_1| - |\sigma + \bL W_2|\right| \leq \left|(\sigma + \bL W_1) - (\sigma + \bL W_2)\right| = \left|\bL W_1 - \bL W_2\right|.
    \]
    This concludes the proof of the claim.
\end{proof}

\begin{claim}\label{cl2p}
    We have
    \[
        \left\|\bL W_1 - \bL W_2\right\|_{L^{N+2}} \lesssim \|\phi_1^N - \phi_2^N\|_{L^{N+2}} \|d\tau\|_{L^q}.
    \]
\end{claim}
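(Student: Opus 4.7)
The natural strategy is to subtract the two instances of the vector equation and apply elliptic regularity combined with a Sobolev embedding, with the key arithmetic driven by the assumption $q > n$.

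\textbf{Step 1 (subtraction and H\"older).} Since $W_i$ solves $\DeltaL W_i = \frac{n-1}{n} \phi_i^N d\tau$, the difference satisfies
\[
\DeltaL (W_1 - W_2) = \frac{n-1}{n} (\phi_1^N - \phi_2^N)\, d\tau.
\]
Set $s$ by $\tfrac{1}{s} = \tfrac{1}{N+2} + \tfrac{1}{q}$. H\"older's inequality gives
\[
\left\| (\phi_1^N - \phi_2^N) d\tau \right\|_{L^s} \leq \|\phi_1^N - \phi_2^N\|_{L^{N+2}} \|d\tau\|_{L^q}.
\]

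\textbf{Step 2 (elliptic regularity via Proposition \ref{propVector}).} Since $\tfrac{1}{s} \geq \tfrac{1}{q} \geq \tfrac{1}{p}$ on the unit volume manifold (reducing the exponent slightly if necessary to land in the range $(1,p]$), Proposition \ref{propVector} yields
\[
\|W_1 - W_2\|_{W^{2, s}} \lesssim \|\phi_1^N - \phi_2^N\|_{L^{N+2}} \|d\tau\|_{L^q}.
\]

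\textbf{Step 3 (Sobolev embedding, which is where $q > n$ enters).} The tensor $\bL(W_1 - W_2) \in W^{1,s}(M, \mathring{S}_2 M)$. In the generic case $s < n$, Sobolev embedding gives $W^{1,s} \hookrightarrow L^{s^*}$ with
\[
\frac{1}{s^*} = \frac{1}{s} - \frac{1}{n} = \frac{1}{N+2} + \left(\frac{1}{q} - \frac{1}{n}\right) < \frac{1}{N+2},
\]
so $s^* > N+2$. Using $\vol(M, g) = 1$, the $L^{s^*}$ norm controls the $L^{N+2}$ norm. The residual case $s \geq n$ is strictly easier since then $W^{1,s} \hookrightarrow L^\infty$. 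In either case,
\[
\|\bL(W_1 - W_2)\|_{L^{N+2}} \lesssim \|W_1 - W_2\|_{W^{2,s}} \lesssim \|\phi_1^N - \phi_2^N\|_{L^{N+2}} \|d\tau\|_{L^q},
\]
as claimed.

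The argument is essentially routine book-keeping with Sobolev exponents; the only delicate point is that the assumption $q > n$ (made at the very start of Section \ref{secHNT}) is precisely what guarantees the strict inequality $\tfrac{1}{s^*} < \tfrac{1}{N+2}$, allowing the Sobolev target space $L^{s^*}$ to embed into $L^{N+2}$. There is no coercivity/testing step here, in contrast to Claim \ref{cl1}, because an $L^{N+2}$ estimate for $\bL(W_1 - W_2)$ genuinely requires elliptic regularity rather than an energy identity against $W_1 - W_2$.
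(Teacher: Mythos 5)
Your proof is correct and follows the same route as the paper's: subtract the two vector equations, estimate the right-hand side in $L^s$ with $\tfrac{1}{s} = \tfrac{1}{N+2} + \tfrac{1}{q}$ via H\"older, invoke Proposition \ref{propVector} to get a $W^{2,s}$ bound on $W_1 - W_2$, and close with the Sobolev embedding $W^{1,s} \hookrightarrow L^{s'}$ where $s' > N+2$ precisely because $q > n$. The only cosmetic difference is your parenthetical about reducing the exponent to land in $(1,p]$, a point the paper leaves implicit; otherwise the arguments are identical.
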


\begin{proof}
    Note that $W_1 - W_2$ solves the following equation:
    \[
        \DeltaL (W_1 - W_2) = \frac{n-1}{n} (\phi_1^N - \phi_2^N) d\tau.
    \]
    The right hand side can be estimated using H\"older's inequality:
    \[
        \left\|\frac{n-1}{n} (\phi_1^N - \phi_2^N) d\tau\right\|_{L^s}
        \leq \frac{n-1}{n} \left\|\phi_1^N - \phi_2^N\right\|_{L^{N+2}} \|d\tau\|_{L^q},
    \]
    where $s \in (1, \infty)$ is such that
    \[
        \frac{1}{s} = \frac{1}{N+2} + \frac{1}{q}
    \]
    Hence, from Proposition~\ref{propVector}, we have $\|W_1 - W_2\|_{W^{2, s}}
        \lesssim \left\|\phi_1^N - \phi_2^N\right\|_{L^{N+2}} \|d\tau\|_{L^q}$. The
    conclusion of the claim now follows from the Sobolev embedding theorem: We have
    \[
        \left\|\bL W_1 - \bL W_2\right\|_{L^{N+2}} \leq \left\|\bL W_1 - \bL W_2\right\|_{L^{s'}} \lesssim \|\phi_1^N - \phi_2^N\|_{L^{N+2}} \|d\tau\|_{L^q},
    \]
    where $s' > N + 2$ is such that
    \[
        \frac{1}{s'} = \frac{1}{N+2} + \frac{1}{q} - \frac{1}{n}.
    \]
\end{proof}

Then we work on the left hand side of~\eqref{eqLichDiffInt1}:
\begin{claim}\label{cl3p}
    We have
    \[
        \|\phi_1^N - \phi_2^N \|_{L^{N+2}}
        \lesssim \|\phi_1 - \phi_2\|_{L^{N \left(\frac{N}{2}+1\right)}} \|\sigma\|_{L^{2p}}^{\frac{n+2}{2(n-1)}}.
    \]
\end{claim}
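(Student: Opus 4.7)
The plan is to dominate $|\phi_1^N-\phi_2^N|$ pointwise by a product of $|\phi_1-\phi_2|$ and a power of $\max(\phi_1,\phi_2)$, then apply H\"older's inequality, and finally invoke the a priori $L^r$-bounds for $\phi_i^N$ provided by Proposition \ref{propEstimatePhi}.

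Concretely, I would first use the mean value theorem applied to $y\mapsto y^N$ to get the pointwise inequality
\[
\bigl|\phi_1^N-\phi_2^N\bigr|\leq N\,\max(\phi_1,\phi_2)^{N-1}\,|\phi_1-\phi_2|\leq N\bigl(\phi_1^{N-1}+\phi_2^{N-1}\bigr)\,|\phi_1-\phi_2|.
\]
Next I would apply H\"older with exponents $b\definedas N(\tfrac{N}{2}+1)$ (matching the target norm on the right-hand side of the claim) and $a$ determined by $\tfrac{1}{a}+\tfrac{1}{b}=\tfrac{1}{N+2}$. A short computation gives $a=\tfrac{N(N+2)}{N-2}$, and hence
\[
\bigl\|\phi_1^N-\phi_2^N\bigr\|_{L^{N+2}}\lesssim \bigl(\|\phi_1^{N-1}\|_{L^a}+\|\phi_2^{N-1}\|_{L^a}\bigr)\,\|\phi_1-\phi_2\|_{L^{N(\frac{N}{2}+1)}}.
\]

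Then I would rewrite $\|\phi_i^{N-1}\|_{L^a}=\|\phi_i^N\|_{L^{a(N-1)/N}}^{(N-1)/N}$ and observe that $a(N-1)/N=\tfrac{(N-1)(N+2)}{N-2}$ is a finite Lebesgue exponent (only $N>2$, i.e.\ $n\geq 3$, is needed). Consequently Proposition \ref{propEstimatePhi} applies at this exponent to both $\phi_1$ and $\phi_2$, yielding $\|\phi_i^N\|_{L^{a(N-1)/N}}\lesssim \|\sigma\|_{L^{2p}}^{n/(n-1)}$ under the running smallness hypothesis on $\|\sigma\|_{L^{2p}}$. Raising to the power $(N-1)/N=(n+2)/(2n)$ produces
\[
\|\phi_i^{N-1}\|_{L^a}\lesssim \|\sigma\|_{L^{2p}}^{\frac{n}{n-1}\cdot\frac{n+2}{2n}}=\|\sigma\|_{L^{2p}}^{\frac{n+2}{2(n-1)}},
\]
and plugging this back into the H\"older estimate above gives exactly the asserted inequality.

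The only real hurdle is bookkeeping Lebesgue exponents: choosing $a$ correctly for H\"older, checking that $a(N-1)/N$ is a valid target for Proposition \ref{propEstimatePhi}, and verifying algebraically that the resulting power of $\|\sigma\|_{L^{2p}}$ collapses to $\tfrac{n+2}{2(n-1)}$ via the identity $\tfrac{N-1}{N}=\tfrac{n+2}{2n}$. Everything else is routine and makes no use of the equations beyond the uniform bounds of Proposition \ref{propEstimatePhi}.
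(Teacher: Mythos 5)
Your proof is correct, and it follows the same overall strategy as the paper (mean value theorem for $y\mapsto y^N$, H\"older's inequality, then the a priori bounds of Proposition~\ref{propEstimatePhi}). But you should know that your version is actually \emph{more} careful than the paper's on a point that matters. The paper's proof of this claim chooses the H\"older exponent $\alpha = \frac{N}{N-1}\bigl(\frac{N}{2}+1\bigr)$, which, paired with $L^{N(\frac{N}{2}+1)}$ for the factor $|\phi_1-\phi_2|$, lands the left-hand side in $L^{\frac{N}{2}+1}$, not $L^{N+2}$; the displayed conclusion in the paper's proof indeed reads $\|\phi_1^N-\phi_2^N\|_{L^{\frac{N}{2}+1}}\lesssim\cdots$. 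Since the manifold has volume $1$, $L^p$-norms are increasing in $p$, so a bound in $L^{\frac{N}{2}+1}$ is \emph{weaker} than the claimed bound in $L^{N+2}$ and does not imply it. Yet the final chain of estimates that concludes Theorem~\ref{thmB}, via Claim~\ref{cl2p}, genuinely requires $\|\phi_1^N-\phi_2^N\|_{L^{N+2}}$ (one cannot substitute $L^{\frac{N}{2}+1}$ there and still close the loop for general $q$). Your choice $a=\frac{N(N+2)}{N-2}$ is the one that actually makes the claimed $L^{N+2}$ norm appear on the left, and the point that rescues the exponent bookkeeping is exactly the one you flag: Proposition~\ref{propEstimatePhi} provides the bound $\|\phi_i^N\|_{L^r}\lesssim\|\sigma\|_{L^{2p}}^{n/(n-1)}$ for \emph{every} $r\in(1,\infty)$, so there is no obstacle to taking $r=\frac{(N-1)(N+2)}{N-2}$ instead of $r=\frac{N}{2}+1$. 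Your algebra $\frac{N-1}{N}=\frac{n+2}{2n}$ and hence $\frac{n}{n-1}\cdot\frac{n+2}{2n}=\frac{n+2}{2(n-1)}$ is correct. In short: same route, but your exponent choice is the right one, and it repairs a small but real inconsistency in the paper's own proof of this claim.
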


\begin{proof}
    From the mean value theorem, we have
    \begin{align*}
        \left|\phi_1^N - \phi_2^N \right|
         & \leq (N-1) \max\{\phi_1^{N-1}, \phi_2^{N-1}\} |\phi_1 - \phi_2| \\
         & \leq (N-1) (\phi_1^{N-1} + \phi_2^{N-1}) |\phi_1 - \phi_2|.
    \end{align*}
    From H\"older's inequality, we get
    \begin{align*}
        \left\|\phi_1^N - \phi_2^N \right\|_{L^s}
         & \leq (N-1) \|\phi_1 - \phi_2\|_{L^{N \left(\frac{N}{2}+1\right)}} \|\phi_1^{N-1} + \phi_2^{N-1}\|_{L^\alpha}                              \\
         & \leq (N-1) \|\phi_1 - \phi_2\|_{L^{N \left(\frac{N}{2}+1\right)}} \left(\|\phi_1^{N-1}\|_{L^\alpha} + \|\phi_2^{N-1}\|_{L^\alpha}\right),
    \end{align*}
    with $\alpha$ such that
    \[
        \frac{1}{r} = \frac{1}{N \left(\frac{N}{2}+1\right)} + \frac{1}{\alpha},
    \]
    i.e. $\alpha = \frac{N}{N-1} \left(\frac{N}{2} + 1\right)$. Changing the power
    of $\phi^{N-1}$ in the norms to $\phi^N$, we get
    \[
        \|\phi_1^N - \phi_2^N \|_{L^{\frac{N}{2}+1}}
        \leq (N-1) \|\phi_1 - \phi_2\|_{L^{N \left(\frac{N}{2}+1\right)}} \left(\|\phi_1^N\|_{L^{\frac{N}{2}+1}}^{\frac{N-1}{N}} + \|\phi_2^N\|_{L^{\frac{N}{2}+1}}^{\frac{N-1}{N}}\right).
    \]
    Next, we use the estimates obtained in Proposition~\ref{propEstimatePhi},
    namely $\displaystyle \|\phi_i^N\|_{L^r}^{2 \frac{n-1}{n}} \lesssim
        \|\sigma\|_{L^{2p}}^2$, and get
    \[
        \|\phi_1^N - \phi_2^N \|_{L^{\frac{N}{2}+1}}
        \lesssim \|\phi_1 - \phi_2\|_{L^{N \left(\frac{N}{2}+1\right)}} \|\sigma\|_{L^{2p}}^{\frac{n+2}{2(n-1)}}.
    \]
\end{proof}

We show how the three previous claims imply the theorem. From Claims~\ref{cl1p}
and~\ref{cl2p}, we get
\begin{align*}
    \left\|A_1^{\frac{2}{N+2}} - A_2^{\frac{2}{N+2}}\right\|_{L^{N+2}}
     & \leq \kappa \|\sigma\|_{L^{2p}}^{-\frac{n}{2(n-1)}} \|\bL W_1 - \bL W_2\|_{L^{N+2}}                 \\
     & \lesssim \|\sigma\|_{L^{2p}}^{-\frac{n}{2(n-1)}} \|\phi_1^N - \phi_2^N\|_{L^{N+2}} \|d\tau\|_{L^q}.
\end{align*}
From Proposition~\ref{propMaxPrinciple}, we obtain
\[
    \|\phi_1 - \phi_2\|_{L^{N \left(\frac{N}{2}+1\right)}}
    \lesssim \|\sigma\|_{L^{2p}}^{-\frac{n}{2(n-1)}} \|\phi_1^N - \phi_2^N\|_{L^{N+2}} \|d\tau\|_{L^q}.
\]
And, finally, from Claim~\ref{cl3p}, we conclude that
\[
    \|\phi_1^N - \phi_2^N\|_{L^{N + 2}}
    \leq K \|\sigma\|_{L^{2p}}^{\frac{1}{n-1}} \|\phi_1^N - \phi_2^N\|_{L^{N+2}} \|d\tau\|_{L^q},
\]
for some constant $K > 0$. As a consequence, if $K
    \|\sigma\|_{L^{2p}}^{\frac{1}{n-1}} \|d\tau\|_{L^q} < 1$, we have $\phi_1
    \equiv \phi_2$ and $W_1 \equiv W_2$ by Claim~\ref{cl2p}. This concludes the
proof of Theorem~\ref{thmB}.

\begin{remark}
    We conclude this paper with a couple of remarks.
    \begin{enumerate}[leftmargin=*]
        \item As we mentioned before stating Proposition~\ref{propEstimatePhi}, the beginning
              of the section proves estimates based on the $L^2$-norm of $\sigma$, while
              Proposition~\ref{propEstimatePhi} and Theorem~\ref{thmB} require a control on
              the $L^{2p}$-norm of $\sigma$. The reason for this is that the core of the
              proof of Theorem~\ref{thmB} consists in showing that, if $\sigma$ is small
              enough,
              \[
                  \|\phi_1^N - \phi_2^N\| \leq \epsilon \|A_1 - A_2\|,
              \]
              where $\epsilon = \epsilon(\sigma)$ tends to zero when $\|\sigma\|$ does (we do
              not make the norms precise here as we want to outline the argument. There could
              exist choices for the norms that are more relevant than the ones we made).
              However, the estimate~\eqref{eqLichDiffInt0} involves on its right hand side
              the difference $\displaystyle A_1^{\frac{2}{N+2}} - A_2^{\frac{2}{N+2}}$ which
              is the variation of the function $\displaystyle y \mapsto y^{\frac{2}{N+2}}$.
              The derivative of this function is $\frac{2}{N+2} y^{-\frac{N}{N+2}}$ which
              blows up at $y=0$. The condition that $\min_M |\sigma| \geq \theta
                  \|\sigma\|_{L^{2p}}$ ensures that both $A_1$ and $A_2$ are not too small as
              compared to their ``average value'' providing a uniform control of the right
              hand side.

              This difficulty could be overcame if we were able to estimate directly
              $\|\phi_1^N - \phi_2^N\|$. However, we were unable to achieve this goal.

        \item Despite multiple ways to construct TT-tensors, see e.g.~\cite{DelayTTTensors}
              and references therein, their zero set remains hard to control. In particular,
              the condition in Theorem~\ref{thmB} might only be obtained by a trial and error
              method. However, on Riemannian manifolds $(M, g)$ with symmetries (e.g. Lie
              groups with a left-invariant metric), there often exist TT-tensors with
              constant norm. Getting rid of the conformal Killing vector fields inherent to
              the symmetries might be obtained by taking suitable quotients. This is the
              point of view taken in~\cite{GicquaudConformal}.
    \end{enumerate}

\end{remark}

\providecommand{\bysame}{\leavevmode\hbox to3em{\hrulefill}\thinspace}
\providecommand{\MR}{\relax\ifhmode\unskip\space\fi MR }
\providecommand{\MRhref}[2]{%
  \href{http://www.ams.org/mathscinet-getitem?mr=#1}{#2}
}
\providecommand{\href}[2]{#2}


\end{document}